\documentclass[aps,prx,twocolumn,groupedaddress,amsmath,superscriptaddress]{revtex4-2}

\usepackage{amsfonts}
\usepackage[dvips]{graphicx}
\usepackage{color}
\usepackage{amssymb}
\usepackage{hyperref}
\usepackage{color}
\usepackage{mathrsfs}
\usepackage{cleveref}
\usepackage{lipsum}
\usepackage{isomath}
\usepackage{amsmath,amsthm}
\usepackage{outlines}
\usepackage{epstopdf}
\usepackage{txfonts}
\DeclareMathAlphabet{\mathcal}{OMS}{cmsy}{m}{n}
\usepackage{dsfont}
\usepackage{setspace}
\usepackage{multirow}
\usepackage{booktabs}
\hypersetup{
	colorlinks=true,     
	urlcolor=blue,       
	linkcolor=blue,      
	citecolor=blue
}
\usepackage[T1]{fontenc}
\usepackage{colortbl}
\usepackage{hhline}
\usepackage{xcolor}
\usepackage[normalem]{ulem}
\allowdisplaybreaks[4]
\usepackage{enumitem}

\makeatletter
\definecolor{nblue}{rgb}{0.3,0.3,1.0}%229
\definecolor{ngreen}{rgb}{0.2,0.7,0.2}%161
\definecolor{nred}{rgb}{0.9,0.1,0}%711&900
\definecolor{nblack}{rgb}{0,0,0}
\definecolor{nyellow}{rgb}{1.0,0.75,0.0}
\hypersetup{
colorlinks=true,  % 启用彩色链接
linkcolor=blue,   % 设置内部链接颜色
citecolor=blue,   % 设置引用链接颜色
urlcolor=blue,    % 设置URL链接颜色
pdfborder={0 0 0}, % 移除链接边框
}

\usepackage{array}
\usepackage{tabularx}
\usepackage[english]{babel}
\newcommand{\bra}[1]{\langle{#1}|}
\newcommand{\ket}[1]{|{#1}\rangle}

\newcommand{\beq}{\begin{equation}}
\newcommand{\eeq}{\end{equation}}
\newcommand{\bqa}{\begin{eqnarray}}
\newcommand{\eqa}{\end{eqnarray}}

\newtheorem{lemma}{Lemma}

\crefname{lemma}{Lemma}{Lemmas}

\makeatletter

\usepackage{microtype}
\makeatletter
\def\fnum@figure{FIG.~\thefigure}
\makeatother

\begin{document}

\title{Semidefinite-programming hierarchies for classically simulable state families}
\author{Mengyan Li}
\affiliation{School of Mathematical Sciences, Beijing University of Posts and Telecommunications, Beijing 100876, China}
\affiliation{Key Laboratory of Mathematics and Information Networks, Beijing University of Posts and Telecommunications, Ministry of Education, Beijing 100876, China}
\affiliation{State Key Laboratory of Networking and Switching Technology, Beijing University of Posts and Telecommunications, Beijing 100876, China}

\author{Yanning Jia}
\affiliation{School of Mathematical Sciences, Beijing University of Posts and Telecommunications, Beijing 100876, China}
\affiliation{Key Laboratory of Mathematics and Information Networks, Beijing University of Posts and Telecommunications, Ministry of Education, Beijing 100876, China}
\affiliation{State Key Laboratory of Networking and Switching Technology, Beijing University of Posts and Telecommunications, Beijing 100876, China}

\author{Fenzhuo Guo}\email{gfenzhuo@bupt.edu.cn}
\affiliation{School of Mathematical Sciences, Beijing University of Posts and Telecommunications, Beijing 100876, China}
\affiliation{Key Laboratory of Mathematics and Information Networks, Beijing University of Posts and Telecommunications, Ministry of Education, Beijing 100876, China}
\affiliation{State Key Laboratory of Networking and Switching Technology, Beijing University of Posts and Telecommunications, Beijing 100876, China}

\author{Haifeng Dong}
\affiliation{School of Instrumentation Science and Opto-Electronics Engineering, Beihang University, Beijing 100191, China}

\author{Sujuan Qin}
\affiliation{State Key Laboratory of Networking and Switching Technology, Beijing University of Posts and Telecommunications, Beijing 100876, China}

\author{Fei Gao}
\affiliation{State Key Laboratory of Networking and Switching Technology, Beijing University of Posts and Telecommunications, Beijing 100876, China}

\begin{abstract}

Identifying whether a state family admits an irreducible quantum advantage is a fundamental task in quantum resource theory and quantum information processing. Here we study classically simulable state families, namely those residing within the convex hull of pairwise commuting families and therefore admitting a classical explanation. We develop a complete semidefinite-programming (SDP) hierarchy characterizing the set of classically simulable state families in arbitrary finite dimension. The key step is to reformulate classical simulability as a feasibility problem over deterministic response functions and auxiliary positive-operator-valued measures (POVMs) simulable by rank-one projective measurements. We establish a complete SDP hierarchy for rank-one projectively simulable POVMs and transfer the resulting characterization to state families, yielding both primal feasibility tests and dual affine witnesses certifying failure of classical simulability. Applying the hierarchy to state families mixed with depolarizing noise gives computable upper bounds on the critical classical visibility, which are matched by explicit classical simulations in several symmetric examples. These results provide a systematic convex-optimization framework for certifying classical simulability of quantum state families.

\end{abstract}

\maketitle

\section{Introduction}

%\textit{Introduction---}
Quantum coherence, a quintessential hallmark of quantum mechanics, stems from the superposition principle. In the resource-theoretic framework, coherence is defined with respect to a prescribed reference basis: a quantum state is incoherent if and only if its density operator is diagonal in that basis, whereas coherence is manifested by the presence of off-diagonal terms~\cite{2014baumgratzQuantifyingCoherence}. Although any individual state can always be diagonalized in its eigenbasis, coherence becomes significant once the reference basis is fixed by experimental constraints, measurement choices, or information processing requirements~\cite{2017streltsovColloquiumQuantum}. As a primitive quantum resource, coherence underlies the generation of more complex forms of quantumness, including entanglement and other quantum correlations~\cite{2015streltsovMeasuringQuantum,2016chitambarRelatingResource,2016maConvertingCoherence}. Beyond this foundational role, coherence also serves as a resource for quantum information processing, with applications to quantum metrology, quantum thermodynamics, state discrimination, and communication tasks~\cite{2016chitambarCriticalExamination,2016winterOperationalResource,2017streltsovColloquiumQuantum}. 

For a family of quantum states $\{\rho_x\}_{x}$, the relevant notion is no longer single-state coherence with respect to a fixed basis, but relative coherence~\cite{2021designolleSetCoherence}. In finite dimensions, a state family has no relative coherence if and only if its members are pairwise commuting; we then call it \textit{classical}, since there exists one basis $\{|e_i\rangle\}_i$ such that $\rho_x=\sum_i q(i|x)|e_i\rangle\langle e_i|$ for all $x$, with all dependence on the label $x$ contained in the classical probability distribution $q(\cdot|x)$. Otherwise, the family is nonclassical. Such nonclassicality can underlie operational advantages in tasks such as state discrimination and communication, where noncommuting encodings outperform classical ones~\cite{2013brunnerDimensionWitnesses,2015baeQuantumState,2023dasilvaSemidefiniteprogrammingbasedOptimization,2026liSemideviceindependentCertification}. 

Noncommutativity alone, however, does not rule out the possibility that a state family lies in the convex hull of classical families. Cobucci \emph{et al.} formalized this property through the notion of a \emph{classically simulable} state family~\cite{2026cobucciOperationallyClassical}: a family may be noncommuting, yet expressible as a convex combination of classical families. Such a family does not by itself certify irreducible quantumness in the corresponding prepare-and-measure scenario, because its measurement statistics admit a local hidden-variable model~\cite{2026cobucciOperationallyClassical}. Thus the relevant boundary is not merely between commuting and noncommuting families, but between families inside and outside the convex hull of classical families. Cobucci \emph{et al.} developed numerical searches over restricted collections of classical families, yielding useful sufficient certificates of simulability.  These inner approximations leave open whether the full set of classically simulable state families admits a systematic characterization with convergence guarantees.

In this work, we develop a complete semidefinite-programming (SDP) hierarchy characterizing the set of classically simulable state families in arbitrary finite dimension. We first reformulate classical simulability as a feasibility problem involving deterministic response functions and auxiliary positive-operator-valued measures (POVMs) simulable by rank-one projective measurements~\cite{2017oszmaniecSimulatingPositiveoperatorvalued}. This establishes a bridge between the classical simulability of state families and the projective simulability of POVMs. We then prove the completeness of an SDP hierarchy for rank-one projectively simulable POVMs, which in turn induces a complete hierarchy for classically simulable state families. Applying the hierarchy to state families mixed with depolarizing noise yields computable upper bounds on the critical classical visibility, i.e., the largest visibility for which the noisy family remains classically simulable. Numerically, low hierarchy levels suffice to determine the critical visibility for several symmetric qubit families and a qutrit mutually unbiased basis (MUB) family, certified by explicit classical simulations, while the dual programs provide concrete affine witnesses at the corresponding classical simulability boundaries.

\section{Classically Simulable State Families}

%\textit{Classically Simulable State Families---}
Let $\boldsymbol{\rho}=\{\rho_x\}_{x=1}^{n}$ be a family of quantum states on a $d$-dimensional Hilbert space. We call $\boldsymbol{\rho}$ \emph{classical} if the states commute pairwise, i.e., $[\rho_x,\rho_{x'}]=0$ for all $x,x'\in[n]:=\{1,2,\ldots,n\}$. We say that $\boldsymbol{\rho}$ is \emph{classically simulable} if it can be written as a convex mixture of classical families, namely if there exists a probability distribution $p(\lambda)$ and, for each $\lambda$, a classical family $\boldsymbol{\sigma}^{\lambda}=\{\sigma_x^{\lambda}\}_{x=1}^{n}$ such that
\begin{equation}
	\label{equ:classical1}
	\rho_x
	=
	\int d\lambda\,
	p(\lambda)\,
	\sigma_x^\lambda,
	\quad
	\forall x\in[n].
\end{equation}

We denote by $\mathcal{C}(d,n)$ the set of all classically simulable \(n\)-state families in dimension \(d\). It is straightforward to verify that \(\mathcal{C}(d,n)\) is convex. For each \(\lambda\), since \(\boldsymbol{\sigma}^{\lambda}\) is classical, there exists a common orthonormal basis \(\{|e_i^\lambda\rangle\}_{i=1}^{d}\) such that 
$
\sigma_x^\lambda
=
\sum_{i=1}^{d}
\bar q(i|x,\lambda)
|e_i^\lambda\rangle\langle e_i^\lambda|
$
for all \(x\in[n]\). Conditioned on \(\lambda\), the family is therefore fully characterized by the conditional probability distribution \(\bar q(i|x,\lambda)\), with no coherence in the basis \(\{|e_i^\lambda\rangle\}_{i}\). Since the set of conditional distributions forms a convex polytope whose extremal points are deterministic assignments, one can further write $\bar q(i|x,\lambda)=\sum_{\mu=1}^{d^n}q(\mu|\lambda)D(i|x,\mu)$, where $\mu$ labels deterministic maps $[n]\to[d]$ and $D(i|x,\mu)\in\{0,1\}$ denotes the associated deterministic response function. Substituting these decompositions into Eq.~\eqref{equ:classical1}, one obtains
\begin{equation}
	\label{equ:classical2}
	\rho_x
	=
	\sum_{\mu=1}^{d^n}
	\sum_{i=1}^{d}
	D(i|x,\mu)\,
	\tau_{i,\mu},
	\quad
	\forall x,
\end{equation}
where $\tau_{i,\mu}:=\int d\lambda\,p(\lambda)q(\mu|\lambda)\,
|e_i^\lambda\rangle\langle e_i^\lambda|$. By construction, $\tau_{i,\mu}\succeq0$ and $
\mathrm{Tr}(\tau_{i,\mu})
=
\int d\lambda\,p(\lambda)q(\mu|\lambda)
=:c_\mu
$ is independent of $i$, with $c_\mu\ge0$ and $\sum_\mu c_\mu=1$. For every $\mu$ satisfying $c_\mu>0$, writing $\tau_{i,\mu}=c_\mu\tau_{i|\mu}$, the normalized operators $\{\tau_{i|\mu}\}_{i=1}^{d}$ define a $d$-outcome POVM. As discussed in Appendix~\ref{appA}, each such POVM is simulable by rank-one projective measurements~\cite{2017oszmaniecSimulatingPositiveoperatorvalued}. Denoting by $\mathcal{P}(d,d;1)$ the set of all $d$-outcome POVMs on a $d$-dimensional Hilbert space that are simulable by rank-one projective measurements, we obtain $\{\tau_{i|\mu}\}_{i=1}^{d}\in\mathcal{P}(d,d;1)$ for every relevant $\mu$.

This establishes a direct bridge between the classical simulability of state families and the projective simulability of POVMs. Accordingly, the task of deciding whether a given family $\boldsymbol{\rho}$ belongs to $\mathcal{C}(d,n)$ can be expressed as the following feasibility problem
\begin{equation}
	\label{equ:feasibilityproblem}
	\begin{aligned}
		\max\quad & 0\\
		\mathrm{w.r.t.}\quad& \{\tau_{i,\mu}\}_{i,\mu},\, \{c_\mu\}_\mu \\
		\mathrm{s.t.}\quad & \rho_x=\sum_{\mu=1}^{d^n}\sum_{i=1}^d D(i|x,\mu)\,\tau_{i,\mu},\quad \forall\, x,\\
		& \{\tau_{i|\mu}\}_{i=1}^d \in \mathcal{P}(d,d;1),\quad \forall\, \mu\in\mathcal N_{+},\\
		& \tau_{i,\mu}\succeq 0,\quad \mathrm{Tr}(\tau_{i,\mu})=c_\mu,\quad \forall\, i,\mu,\\
		& c_\mu\ge0,\quad \sum_{i=1}^d \tau_{i,\mu} = c_\mu \mathbb{I}_d, \quad \forall\, \mu,\quad \sum_{\mu=1}^{d^n} c_\mu = 1.
	\end{aligned}
\end{equation}
Here, $\mathcal N_{+}:=\{\mu:c_\mu>0\}$. If a semidefinite characterization of $\mathcal P(d,d;1)$ is available, then the above feasibility problem can be cast as an SDP by passing to the associated conic hull~\cite{boydConvexOptimization2004}
\[
\operatorname{cone} \bigl(\mathcal{P}(d,d;1)\bigr) := \left\{ \{T_i\}_{i=1}^d \;\middle|\; \begin{aligned}
	&\exists\, c\ge 0,\ \{M_i\}_{i=1}^d\in\mathcal{P}(d,d;1),\\
	& T_i=c\,M_i,\quad i=1,\dots,d
\end{aligned} \right\}.
\]
The introduction of the conic hull removes the nonlinear normalization  $\tau_{i|\mu}=\tau_{i,\mu}/c_\mu$ appearing in Eq.~\eqref{equ:feasibilityproblem}. Indeed, for every $\mu\in\mathcal N_+$, the condition $\{\tau_{i|\mu}\}_{i=1}^{d}\in\mathcal P(d,d;1)$ is equivalent to requiring that the unnormalized operators $\{\tau_{i,\mu}\}_{i=1}^{d}$ belong to $\operatorname{cone}(\mathcal P(d,d;1))$. In this formulation, the normalization parameter $c_\mu$ is absorbed into the conic variables, so that the membership constraint becomes homogeneous and no division by $c_\mu$ is required. 

To quantify the robustness of a state family against depolarizing noise, we further introduce the depolarizing channel 
$\mathrm{\Lambda}_v(\rho)=v\rho+[(1-v)/d]\mathbb{I}_d$, with $v\in[0,1]$, and define the critical classical visibility
\begin{equation}
	\label{equ:classical_visibility}
	\chi(\boldsymbol{\rho})
	=
	\max\left\{v\in[0,1]\ \big|\ \{\mathrm{\Lambda}_v(\rho_x)\}_x\in\mathcal{C}(d,n)\right\}.
\end{equation}
This quantity is the largest visibility for which the depolarized family admits a classical simulation. Thus, \(\chi(\boldsymbol{\rho})=1\) precisely when 
\(\boldsymbol{\rho}\in\mathcal{C}(d,n)\), and smaller values of 
\(\chi(\boldsymbol{\rho})\) indicate greater robustness against depolarizing noise.

\section{Semidefinite characterization of $\mathcal{P}(d,d;1)$ and completeness of the hierarchy}

%\textit{Semidefinite characterization of $\mathcal{P}(d,d;1)$ and completeness of the hierarchy---}
To characterize rank-one projective simulability, we introduce a hierarchy of semidefinite outer approximations to $\mathcal{P}(d,d;1)$. For each integer $\ell\ge2$, let
$\mathcal{P}_\ell(d,d;1)$ denote the set of POVMs
$\mathbf M=\{M_i\}_{i=1}^{d}$
with $\mathrm{Tr}(M_i)=1$ for all $i$,
for which there exist lifted operators
$\{\mathrm{\Pi}_{\mathbf i}^{(\ell)}\}_{\mathbf i\in[d]^\ell}$
acting on $(\mathbb C^d)^{\otimes\ell}$, where
$\mathbf i=(i_1,\ldots,i_\ell)$, satisfying
\begin{equation}
	\label{equ:liftedconstraint}
	\left\{
	\begin{aligned}
		&\mathrm{\Pi}_{\mathbf i}^{(\ell)}\succeq0,
		\quad\forall\,\mathbf i\in[d]^\ell,\\
		&\mathrm{Tr}\!\left(
		\mathrm{\Pi}_{\mathbf i}^{(\ell)}
		\right)=1,
		\quad\forall\,\mathbf i\in[d]^\ell,\\
		&\sum_{\mathbf i\setminus i_j}
		\mathrm{\Pi}_{\mathbf i}^{(\ell)}
		=
		\mathbb I_{d^{j-1}}
		\otimes
		M_{i_j}
		\otimes
		\mathbb I_{d^{\ell-j}},
		\quad
		\forall\,j\in[\ell],\,i_j\in[d],\\
		&
		\mathrm{Tr}_{\alpha}
		\!\left(
		V_{(\alpha\beta)}
		\mathrm{\Pi}_{\mathbf i}^{(\ell)}
		\right)
		=
		0,
		\quad
		\forall\,
		\mathbf i,\,
		\alpha<\beta:
		i_\alpha\neq i_\beta.
	\end{aligned}
	\right.
\end{equation}
Here
$\sum_{\mathbf{i}\setminus i_j}$
denotes summation over all entries of $\mathbf{i}$ except the $j$th one,
$V_{(\alpha\beta)}
=
\sum_{r,s=1}^{d}
|r\rangle_\beta\langle s|
\otimes
|s\rangle_\alpha\langle r|
\otimes
\mathbb I_{\mathrm{rest}}$
swaps the $\alpha$th and $\beta$th tensor factors, and
$\mathrm{Tr}_{\alpha}$ denotes the partial trace over the $\alpha$th tensor factor. The sets
$\{\mathcal P_\ell(d,d;1)\}_{\ell\ge2}$
define a sequence of semidefinite outer approximations to
$\mathcal P(d,d;1)$.
We first show that these relaxations form a hierarchy.

\textit{Proposition 1 (Hierarchy structure).}
For every $\ell\ge2$,
\[
\mathcal P_{\ell+1}(d,d;1)
\subseteq
\mathcal P_\ell(d,d;1).
\]

\textit{Proof.}
Let $\mathbf M\in\mathcal P_{\ell+1}(d,d;1)$ with feasible lifted operators $\{\mathrm{\Pi}_{\mathbf i,i_{\ell+1}}^{(\ell+1)}\}$ at level $\ell+1$, where $\mathbf i=(i_1,\ldots,i_\ell)$. Following Ref.~\cite{2025brinsterRobustCertification}, define
$\mathrm{\Pi}_{\mathbf i}^{(\ell)}
:=
d^{-1}
\sum_{i_{\ell+1}=1}^{d}
\mathrm{Tr}_{\ell+1}
\!\bigl(
\mathrm{\Pi}_{\mathbf i,i_{\ell+1}}^{(\ell+1)}
\bigr)$.
Positivity and trace normalization are preserved under partial trace and convex averaging. For the marginal constraints, for every $j\in[\ell]$,
$
\sum_{\mathbf i\setminus i_j}
\mathrm{\Pi}_{\mathbf i}^{(\ell)}
=
d^{-1}
\mathrm{Tr}_{\ell+1}
\!\left(
\sum_{(\mathbf i,i_{\ell+1})\setminus i_j}
\mathrm{\Pi}_{\mathbf i,i_{\ell+1}}^{(\ell+1)}
\right)
=
\mathbb I_{d^{j-1}}
\otimes
M_{i_j}
\otimes
\mathbb I_{d^{\ell-j}},
$
where the last equality follows from the level-$(\ell+1)$ feasibility conditions. The swap constraints follow similarly, since
$V_{(\alpha\beta)}$
acts only on the first $\ell$ tensor factors and therefore commutes with
$\mathrm{Tr}_{\ell+1}$.
Hence $\{\mathrm{\Pi}_{\mathbf i}^{(\ell)}\}_{\mathbf i}$ satisfies all constraints in Eq.~\eqref{equ:liftedconstraint}, establishing $\mathbf M\in\mathcal P_\ell(d,d;1)$.
\hfill$\square$

The hierarchy is moreover asymptotically complete.

\hypertarget{theorem1}{}
\textit{Theorem 1 (Completeness of the rank-one projective simulability hierarchy).} Let
$\mathbf M=\{M_i\}_{i=1}^{d}$
be a POVM on a $d$-dimensional Hilbert space satisfying
$\mathrm{Tr}(M_i)=1$
for all $i$.
Then
$\mathbf M\in\mathcal P_\ell(d,d;1)$
for every
$\ell\ge2$
if and only if
$\mathbf M\in\mathcal P(d,d;1)$.

The proof is given in Appendix~\ref{appB}.

\section{Semidefinite characterization of $\mathcal{C}(d,n)$ and the dual witness}

%\textit{Semidefinite characterization of $\mathcal{C}(d,n)$ and the dual witness---}
We now return to the central problem of the paper: deciding whether a given family $\boldsymbol{\rho}$ belongs to $\mathcal{C}(d,n)$. Eq.~\eqref{equ:classical2} shows that this question can be reduced to the existence of a simulations over deterministic response function, together with positive semidefinite operators $\{\tau_{i,\mu}\}$ whose normalized versions $\{\tau_{i|\mu}\}_{i=1}^d$ form a POVM in $\mathcal{P}(d,d;1)$ for each $\mu$. In this way, deciding classical simulability of state families is reduced to deciding membership of a collection of auxiliary POVMs in the rank-one projectively simulable set $\mathcal{P}(d,d;1)$. Theorem~\hyperlink{theorem1}{1} therefore makes this reduction algorithmically useful: the complete hierarchy for $\mathcal{P}(d,d;1)$ induces a complete hierarchy for $\mathcal{C}(d,n)$. 

Accordingly, for each hierarchy level $\ell$, we define $\mathcal{C}_\ell(d,n)$ by replacing, in Eq.~\eqref{equ:feasibilityproblem}, the exact constraint $\{\tau_{i|\mu}\}_{i=1}^d\in\mathcal{P}(d,d;1)$ with its level-$\ell$ relaxation. Passing to the associated conic form leads to the following SDP feasibility problem
\begin{equation}
	\label{equ:mainsdp} 
	\begin{aligned} 
		\max\quad & 0\\ \mathrm{w.r.t.}\quad& {\{\tau_{i,\mu}\}_{i,\mu},\, \{c_\mu\}_\mu,\, \{\mathrm{\Pi}^{(\ell)}_{\mathbf{i},\mu}\}_{\mathbf{i},\mu}} \\ \mathrm{s.t.}\quad & \rho_x =\sum_{\mu=1}^{d^n}\sum_{i=1}^dD(i|x,\mu)\,\tau_{i,\mu},\quad \forall\, x,\\ & \mathrm{\Pi}^{(\ell)}_{\mathbf{i},\mu}\succeq 0,\quad \mathrm{Tr}(\mathrm{\Pi}^{(\ell)}_{\mathbf{i},\mu})=c_\mu,\quad \forall\,\mathbf{i},\mu,\\ & \sum_{\mathbf{i}\setminus i_j}\mathrm{\Pi}^{(\ell)}_{\mathbf{i},\mu} =\mathbb{I}_{d^{j-1}}\otimes \tau_{i_j,\mu}\otimes \mathbb{I}_{d^{\ell-j}},\quad \forall\, j, i_j,\mu,\\ & \mathrm{Tr}_\alpha\!\left(V_{(\alpha\beta)}\mathrm{\Pi}^{(\ell)}_{\mathbf{i},\mu}\right)=0,\quad \forall\,\mu,\mathbf{i},\alpha<\beta:i_\alpha\neq i_\beta,\\ & \tau_{i,\mu}\succeq 0,\quad \mathrm{Tr}(\tau_{i,\mu})=c_\mu,\quad \forall\,i,\mu,\\ & c_\mu\ge 0,\quad \sum_{i=1}^d \tau_{i,\mu} = c_\mu \mathbb{I}_d, \quad\forall\,\mu,\quad \sum_{\mu=1}^{d^n} c_\mu=1. 
	\end{aligned} 
\end{equation}
\noindent The set $\mathcal{C}_\ell(d,n)$ defined by Eq.~\eqref{equ:mainsdp} is obtained by imposing the conic lifted constraints for $\mathcal{P}_\ell(d,d;1)$ separately for each deterministic assignment $\mu$~\cite{zeroValueNote}. We refer to this as the canonical lift. In this formulation, the only relaxation is the replacement of the exact condition $\mathcal{P}(d,d;1)$ by its level-$\ell$ relaxation $\mathcal{P}_\ell(d,d;1)$; no additional relaxation is introduced. A computationally cheaper alternative is obtained by first forming the aggregate operators $\bar{\tau}_i:=\sum_\mu\tau_{i,\mu}$ and imposing the lifted constraints only on the single POVM $\{\bar{\tau}_i\}_{i=1}^d$. This aggregated relaxation removes the deterministic-assignment index from the lifted variables and substantially reduces the SDP size, at the cost of discarding part of the hidden-variable structure. Consequently, feasibility of the canonical lift implies feasibility of the aggregated relaxation, but not conversely. Thus the canonical lift gives the sharp level-$\ell$ outer approximation used for the completeness result, whereas the aggregated relaxation provides a scalable preliminary test and, when infeasible, still rules out membership in the canonical relaxation. The explicit SDP forms and complexity estimates for both implementations are given in Appendix~\ref{appC}.

\hypertarget{theorem2}{}
\textit{Theorem 2 (Completeness of the classical simulability hierarchy).} 
Let $\boldsymbol{\rho}=\{\rho_x\}_{x=1}^{n}$ be a state family on a $d$-dimensional Hilbert space. Then $\boldsymbol{\rho}\in\mathcal C_\ell(d,n)$ for every $\ell\ge2$ if and only if $\boldsymbol{\rho}\in\mathcal C(d,n)$.

The completeness of the classical simulability hierarchy follows directly from Theorem~\hyperlink{theorem1}{1}. Since Eq.~\eqref{equ:mainsdp} introduces no relaxation beyond replacing $\mathcal P(d,d;1)$ by $\mathcal P_\ell(d,d;1)$, exact recovery of $\mathcal P(d,d;1)$ in the hierarchy limit transfers directly to exact recovery of $\mathcal C(d,n)$. Theorem~\hyperlink{theorem2}{2} therefore establishes a convergent semidefinite characterization of classical simulability: the set $\mathcal C(d,n)$ is recovered exactly as the intersection of the outer approximations $\{\mathcal C_\ell(d,n)\}_{\ell\ge2}$, reducing the original convex-simulations problem to a hierarchy of finite-dimensional SDP feasibility tests.

The SDP hierarchy also admits a dual formulation, which yields explicit witnesses the failure of classical simulability. At level $\ell$, introduce Hermitian variables $W_x^{(\ell)}, \,S_\mu^{(\ell)} \in \mathbb{H}_d$ and $Z_{j,i,\mu}^{(\ell)}\in\mathbb{H}_{d^\ell}$, real variables $y_{\mathbf{i},\mu}^{(\ell)}\in\mathbb{R}$, and a scalar $t^{(\ell)}\in\mathbb{R}$, together with complex-valued multipliers associated with the swap constraints. The resulting dual SDP can be written as
\begin{equation}\label{equ:maindualsdp} 
	\begin{aligned}
		\min \quad &
		\sum_x \mathrm{Tr}(W_x^{(\ell)}\rho_x)-t^{(\ell)} \\
		\mathrm{s.t.}\quad &
		\sum_{x:\mu(x)=i} W_x^{(\ell)}
		+\sum_{j=1}^{\ell}\mathrm{Tr}_{\setminus j}(Z_{j,i,\mu}^{(\ell)}) - S_\mu^{(\ell)}
		\succeq 0,
		\quad \forall\,i,\mu,\\
		&
		\sum_{j=1}^{\ell} Z_{j,i_j,\mu}^{(\ell)}
		+y_{\mathbf{i},\mu}^{(\ell)}\,\mathbb{I}_{d^\ell}
		+\mathcal{D}_{\mathbf{i},\mu}^{(\ell)}
		\preceq 0,
		\quad \forall\,\mathbf{i},\mu,\\
		&
		\mathrm{Tr}(S_\mu^{(\ell)}) + \sum_{\mathbf{i}} y_{\mathbf{i},\mu}^{(\ell)}-t^{(\ell)} \ge 0,
		\quad \forall\,\mu .
	\end{aligned}
\end{equation}
where $\mathrm{Tr}_{\setminus j}$ denotes the partial trace over all tensor factors except the $j$th one. The Hermitian operator $\mathcal{D}_{\mathbf{i},\mu}^{(\ell)}$ is induced by the complex-valued multipliers associated with the swap constraints. Its explicit form, together with the derivation of Eq.~\eqref{equ:maindualsdp}, is given in Appendix~\ref{appD}.

By weak duality, any level-$\ell$ dual-feasible point defines an affine functional
\begin{equation}\label{equ:dualwitness} 
\mathcal{W}^{(\ell)}(\boldsymbol{\omega})
:=
\sum_x \mathrm{Tr}(W_x^{(\ell)}\omega_x)-t^{(\ell)}
\end{equation}
that is nonnegative on $\mathcal{C}_\ell(d,n)$. Therefore, if one finds a dual-feasible point such that $\mathcal{W}^{(\ell)}(\boldsymbol{\rho})<0$, then $\boldsymbol{\rho}\notin\mathcal{C}_\ell(d,n)$, and hence $\boldsymbol{\rho}\notin\mathcal{C}(d,n)$. Thus the operators $\{W_x^{(\ell)}\}_x$, together with the threshold $t^{(\ell)}$, provide an explicit witness for the failure of classical simulability. Since the witness depends only on expectation values of the operators $W_x^{(\ell)}$, it yields a directly testable certificate of nonclassicality without requiring full state-family tomography \cite{1996horodeckiSeparabilityMixed,2007eisertQuantitativeEntanglement,2009guhneEntanglementDetection}.

\section{Numerical Critical Visibilities and Witnesses}
\label{sec:numerics}

%\textit{Numerical Critical Visibilities and Witnesses---}
The hierarchy immediately yields upper bounds on the critical visibility by replacing the exact set $\mathcal C(d,n)$ in Eq.~\eqref{equ:classical_visibility} with its level-$\ell$ relaxation and computing
\[
\chi_\ell(\boldsymbol{\rho})
=
\max\left\{
v\in[0,1]\; \middle|\;
\{\mathrm{\Lambda}_v(\rho_x)\}_x\in\mathcal C_\ell(d,n)
\right\}.
\]
Since $\mathcal C_\ell(d,n)$ is an outer approximation to $\mathcal C(d,n)$, one has $\chi_\ell(\boldsymbol{\rho})\ge\chi(\boldsymbol{\rho})$. Moreover, the hierarchy structure implies $\chi_2(\boldsymbol{\rho})\ge\chi_3(\boldsymbol{\rho})\ge\cdots\ge\chi(\boldsymbol{\rho})$, as illustrated geometrically in Fig.~\ref{fig1}. Operationally, computing $\chi_\ell$ amounts to replacing $\rho_x$ in Eq.~\eqref{equ:mainsdp} by $\mathrm{\Lambda}_v(\rho_x)$ and maximizing $v$.

\begin{figure}
	\centering
	\includegraphics[width=8.5cm]{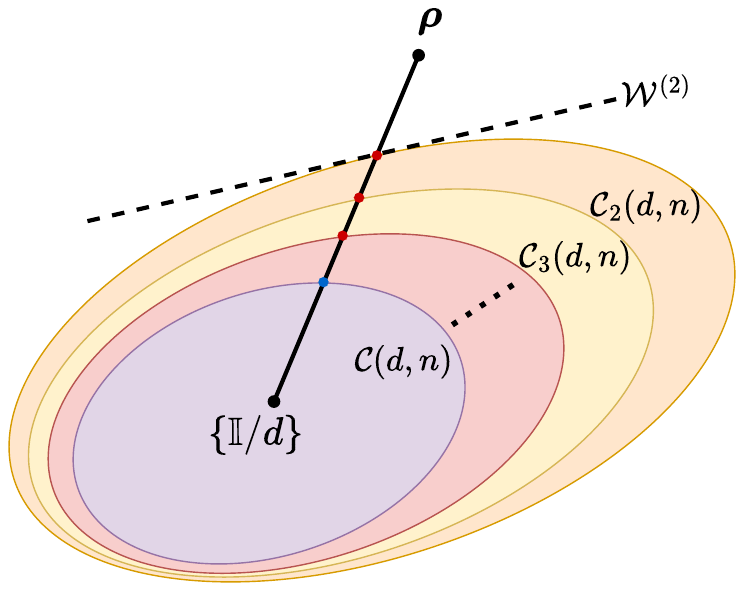}
	\caption{Geometric illustration of the hierarchy bounds for critical visibility. The black line represents the depolarizing trajectory $\{\mathrm{\Lambda}_v(\boldsymbol{\rho})\}_{v\in[0,1]}$, interpolating between the maximally mixed family $\{\mathbb{I}/d\}$ and the target family $\boldsymbol{\rho}$. Intersections of this trajectory with different hierarchy levels determine the visibility bounds $\chi_\ell$, shown as red points. As the hierarchy level increases, these bounds decrease monotonically and converge to the exact critical visibility $\chi$, shown in blue. The dashed line illustrates an affine witness separating the target family from a relaxation level.}
	\label{fig1}
\end{figure}

We first evaluate this quantity for four symmetric qubit families, written in the computational basis $\{|0\rangle,|1\rangle\}$. Let $\mathrm{\Psi}_\phi:=|\phi\rangle\langle\phi|$. The BB84 family is $\{\mathrm{\Psi}_0,\mathrm{\Psi}_1,\mathrm{\Psi}_+,\mathrm{\Psi}_-\}$, where $|\pm\rangle=(|0\rangle\pm|1\rangle)/\sqrt{2}$. The six-state family consists of the eigenstates of the three Pauli observables, namely $\{\mathrm{\Psi}_0,\mathrm{\Psi}_1,\mathrm{\Psi}_+,\mathrm{\Psi}_-,\mathrm{\Psi}_{+i},\mathrm{\Psi}_{-i}\}$, with $|\pm i\rangle=(|0\rangle\pm i|1\rangle)/\sqrt{2}$. We also consider the trine family $\{\mathrm{\Psi}_{\phi_k}\}_{k=0}^{2}$, where $|\phi_k\rangle=(|0\rangle+e^{2\pi i k/3}|1\rangle)/\sqrt{2}$, and the tetrahedral, or qubit SIC, family $\{\rho_a\}_{a=1}^{4}$, with $\rho_a=(\mathbb{I}+\mathbf{r}_a\cdot\boldsymbol{\sigma})/2$ and $\mathbf{r}_a\in\{(1,1,1),(1,-1,-1),(-1,1,-1),(-1,-1,1)\}/\sqrt{3}$. Using the level-$2$ hierarchy in Eq.~\eqref{equ:mainsdp}, we obtain, for the BB84, six-state, trine, and tetrahedral families respectively,
\[
\chi_2
\simeq
\frac{1}{\sqrt{2}},\quad
\frac{1}{\sqrt{3}},\quad
\frac{2}{3},\quad
\frac{1}{\sqrt{3}},
\]
with numerical errors below $10^{-10}$. The first value is consistent with the known analytical threshold for the noisy BB84 family~\cite{2026cobucciOperationallyClassical}. For all four families, using the optimal level-$2$ solutions $\{c_\mu,\tau_{i,\mu}\}$ as guidance, we construct explicit classical simulations at the displayed visibility values. This establishes matching lower bounds and thereby identifies the displayed values as the critical classical visibilities. We further evaluated hierarchy levels $\ell=3$ and $\ell=4$, obtaining identical optima within numerical precision, indicating that low hierarchy levels already capture the relevant boundary of $\mathcal C(2,n)$ for these symmetric qubit families. The code and data used to support the above analysis are available in Ref.~\cite{Li2026ClassicalSimulability}.

We next consider higher-dimensional examples. The results are shown in Table~\ref{tab:highdim}. For each state family, the canonical lift and the aggregated relaxation yield level-$2$ upper bounds on \(\chi\), whereas the method of Ref.~\cite{2026cobucciOperationallyClassical} provides lower bounds through explicit simulations using finite collections of classical families. Together, the two approaches bracket the critical classical visibility from opposite sides.

\begin{table}[t]
	\centering
	\caption{
		Numerical bounds on the critical classical visibility for higher-dimensional state families. The columns $\chi_2^{\mathrm{agg}}$ and $\chi_2^{\mathrm{can}}$ are the level-$2$ relaxed critical visibilities obtained from the aggregated relaxation and the canonical lift, respectively; both are upper bounds on the exact value $\chi$. The column $\chi_{3000}^{\mathrm{in}}$ is a lower bound obtained by the finite classical-family simulation method of Ref.~\cite{2026cobucciOperationallyClassical}, using a collection of $3000$ classical families. The families $\boldsymbol{\rho}_1$, $\boldsymbol{\rho}_2$, and $\boldsymbol{\rho}_3$ consist of all states from $2$, $3$, and $4$ mutually unbiased bases in dimension $3$, respectively. The family $\boldsymbol{\rho}_4$ is the qutrit SIC family. The families $\boldsymbol{\rho}_5$ and $\boldsymbol{\rho}_6$ are given by $\{\ket{1}\bra{1},\ldots,\ket{d-1}\bra{d-1},\ket{u_d}\bra{u_d}\}$ for $d=4$ and $d=5$, respectively, where $\ket{u_d}=d^{-1/2}\sum_{j=1}^{d}\ket{j}$ is the maximally coherent state. A dash indicates that the corresponding computation was not completed within our current computational resources; the relevant complexity estimates are discussed in Appendix~\ref{appC}.
	}
	\label{tab:highdim}
	\renewcommand{\arraystretch}{1.5}
	\begin{ruledtabular}
		\begin{tabular}{ccccc}
			Family 
			& $(d,n)$
			& \multicolumn{2}{c}{Upper bounds}
			& Lower bound \\
			\cline{3-4}
			& 
			& $\chi_2^{\mathrm{agg}}$
			& $\chi_2^{\mathrm{can}}$
			& $\chi_{3000}^{\mathrm{in}}$ \\
			\hline
			$\boldsymbol{\rho}_1$  
			& $(3,6)$ & 0.6667 & 0.6667 & 0.6122 \\
			$\boldsymbol{\rho}_2$  
			& $(3,9)$ & 0.5686 & 0.5686 & 0.5257 \\
			$\boldsymbol{\rho}_3$ 
			& $(3,12)$ & 0.4729 & -- & 0.4567 \\
			$\boldsymbol{\rho}_4$   
			& $(3,9)$ & 0.4698 & 0.4698 & 0.4516 \\
			$\boldsymbol{\rho}_5$  
			& $(4,4)$ & 0.6852 & 0.6852 & 0.5856 \\
			$\boldsymbol{\rho}_6$  
			& $(5,5)$ & 0.6784 & 0.6784 & 0.4891 
		\end{tabular}
	\end{ruledtabular}
\end{table}

Several features are worth noting. First, whenever both level-$2$ upper bounds were computed, the canonical lift and the aggregated relaxation coincided within numerical precision. This suggests that, for the highly symmetric families considered here, the aggregated relaxation suffices to capture the relevant boundary information retained by the canonical lift, possibly because the underlying symmetries render the additional hidden-variable degrees of freedom redundant at low hierarchy levels. Second, for the qutrit MUB families $\boldsymbol{\rho}_1,\boldsymbol{\rho}_2,\boldsymbol{\rho}_3$, the upper bound decreases as the number of mutually unbiased bases increases, consistent with the intuition that additional mutually unbiased directions make classical simulations harder to realize. Third, the qutrit SIC family $\boldsymbol{\rho}_4$ yields a level-$2$ upper bound close to that of the four-MUB family despite their different cardinalities and symmetry structures, indicating that the critical visibility depends not only on the number of states but also on the geometrical structure of the state family. Finally, the families $\boldsymbol{\rho}_5$ and $\boldsymbol{\rho}_6$ exhibit comparatively large and close upper bounds, consistent with their sparse structure consisting of one maximally coherent state together with $d-1$ computational-basis states. 

For the qutrit family $\boldsymbol{\rho}_1$, we also evaluated the canonical lift hierarchy at level $\ell=3$. The optimal value remained unchanged within numerical precision, giving $\chi_3(\boldsymbol{\rho}_1)\simeq 2/3$. Moreover, as in the qubit examples, guided by the optimal numerical solution, we construct explicit classical simulations at visibility $v=2/3$. Hence the level-$2$ and level-$3$ upper bounds are matched by feasible classical simulations, identifying $\chi(\boldsymbol{\rho}_1)=2/3$. This provides a higher-dimensional instance in which the low-level canonical hierarchy captures the relevant boundary of $\mathcal{C}(3,6)$. For the aforementioned four qubit families and for $\boldsymbol{\rho}_1$, we further extracted numerical witnesses of the form given in Eq.~\eqref{equ:dualwitness}. In these instances, the corresponding explicit classical simulations and the stability across hierarchy levels indicate that the witnesses correspond to relatively tight approximations of the classically simulable boundary. They therefore provide effective affine certificates for the failure of classical simulability. The witness data, together with the explicit classical simulations and implementation details, are reported in Ref.~\cite{Li2026ClassicalSimulability}.

%	\begin{ruledtabular}
%		\begin{tabular}{c|cccc}
%			Family 
%			& Normal 
%			& $\chi_2^{\mathrm{agg}}$$\chi_2^{\mathrm{agg}}$
%			& $\chi_2^{\mathrm{can}}$
%			& $\chi_{3000}^{\mathrm{in}}$ \\
%			\hline
%			$\boldsymbol{\rho}_{\mathrm{MUB2}}^{(3,6)}$  
%			& 0.6667 & 0.6667 & 0.6667 & 0.6122 \\
%			
%			$\boldsymbol{\rho}_{\mathrm{MUB3}}^{(3,9)}$  
%			& 0.5686 & 0.5686 &   & 0.5257 \\
%			
%			$\boldsymbol{\rho}_{\mathrm{MUB4}}^{(3,12)}$ 
%			& 0.4729 &  & / & 0.4567 \\
%			
%			$\boldsymbol{\rho}_{\mathrm{SIC}}^{(3,9)}$   
%			& 0.4699 & 0.4699 &  0.4699 &  \\
%			
%			$\boldsymbol{\rho}_{\mathrm{Cb+Us}}^{(4,4)}$  
%			& 0.6852 & 0.6852 &  0.6852 &  \\
%			
%			$\boldsymbol{\rho}_{\mathrm{Cb+Us}}^{(5,5)}$  
%			& 0.6784 & 0.6784 & 0.6784 &  \\
%		\end{tabular}
%	\end{ruledtabular}

\section{Conclusion}
\label{sec:conclusion}

%\textit{Conclusion---}
We characterized classically simulable state families through a complete semidefinite hierarchy. The problem addressed here goes beyond detecting noncommutativity: it asks whether a state family lies in the convex hull of classical families. The key step was to reduce classical simulability to the rank-one projective simulability of auxiliary POVMs. By proving the completeness of the associated hierarchy and transferring the resulting characterization to state families, we obtained a hierarchy of computable SDP outer approximations whose intersection recovers $\mathcal{C}(d,n)$ exactly. Applied to state families mixed with depolarizing noise, the hierarchy yielded computable upper bounds on the critical classical visibility.

Our numerical results identified critical classical visibilities for several symmetric qubit families and for a qutrit MUB family, with the numerical upper bounds matched by explicit  classical simulations. For all higher-dimensional families investigated, the canonical lift and the aggregated relaxation produced identical values within numerical precision. This suggests that, at least for these structured examples, the full hidden-variable description encoded in the canonical lift may not be necessary to characterize classical simulability. More broadly, the SDP hierarchy and dual affine witnesses developed in this work provide operational tools for certifying irreducible quantumness, while the underlying framework may further extend to related convex-geometric problems such as quantum measurements without superposition~\cite{2026cobucciSimulatingQuantum} and the absolute dimensionality of quantum state families~\cite{2024bernalAbsoluteDimensionality}.

\section*{Note added}
During the final stage of preparing this manuscript, we became aware of the independent work of Cobucci \textit{et al.}~\cite{2026cobucciCertifyingCoherencea}. That work addresses a related problem and develops a complete SDP hierarchy based on multipartite extensions, whose convergence is established through a correspondence with POVMs admitting decompositions into commuting effects together with de Finetti theorems. In contrast, our approach starts from a reformulation of classical simulability as a feasibility problem over deterministic response functions and auxiliary POVMs that are simulable by rank-one projective measurements. Building on this formulation, we derive a complete SDP hierarchy characterizing rank-one projectively simulable POVMs and transfer this characterization to state families.

\section*{Acknowledgments}

%\textit{Acknowledgments---}
This work is supported by the National Natural Science Foundation of China (Grants No. 62571060, No. 62171056, and No. 62220106012).

\section*{Data availability}

%\textit{Data availability---}
The data that support the findings of this article are openly available~\cite{Li2026ClassicalSimulability}.

\appendix

\onecolumngrid

\newpage

\begin{appendix}

\setcounter{equation}{0}
\renewcommand\theequation{A\arabic{equation}}

\section{Rank-One Projective Simulability of the Auxiliary POVMs}
\label{appA}

In this Appendix, we justify the claim made below Eq.~\eqref{equ:classical2} that, for every relevant deterministic assignment $\mu$, the normalized operators
$\{\tau_{i|\mu}\}_{i=1}^{d}$
define POVMs simulable by rank-one projective measurements. 

Starting from Eq.~\eqref{equ:classical2},
\begin{equation}\label{app1}
\rho_x
=
\sum_{\mu=1}^{d^n}
\sum_{i=1}^{d}
D(i|x,\mu)\tau_{i,\mu},
\end{equation}
where
\[
\tau_{i,\mu}
:=
\int d\lambda\,
p(\lambda)
q(\mu|\lambda)
|e_i^\lambda\rangle
\langle e_i^\lambda|.
\]
By construction,
$\tau_{i,\mu}\succeq0$.
Moreover,
$
\mathrm{Tr}(\tau_{i,\mu})
=
\int d\lambda\,
p(\lambda)
q(\mu|\lambda)
=:c_\mu,
$
which is independent of $i$.
Since $q(\mu|\lambda)$ is a conditional probability distribution,
$c_\mu\ge0$
and
$\sum_\mu c_\mu=1$.

If $c_\mu=0$, positivity implies
$\tau_{i,\mu}=0$
for every $i$, and such terms contribute trivially to Eq.~\eqref{app1}. We therefore restrict attention to relevant assignments satisfying
$c_\mu>0$
and define
$
\tau_{i,\mu}
=
c_\mu\tau_{i|\mu}
$
with
$
\tau_{i|\mu}
=
\tau_{i,\mu}/c_\mu.
$
Summing over $i$, we find
\[
\sum_{i=1}^{d}
\tau_{i|\mu}
=
\frac{1}{c_\mu}
\int d\lambda\,
p(\lambda)
q(\mu|\lambda)
\sum_{i=1}^{d}
|e_i^\lambda\rangle\langle e_i^\lambda|
=
\mathbb{I}_d,
\]
showing that
$\{\tau_{i|\mu}\}_{i=1}^{d}$
forms a $d$-outcome POVM.

Furthermore,
\begin{equation}\label{equ:normalizedoperator}
\tau_{i|\mu}
=
\int d\lambda\,
\frac{
	p(\lambda)q(\mu|\lambda)
}{
	c_\mu
}
|e_i^\lambda\rangle\langle e_i^\lambda|.
\end{equation}
For convenience, with a slight abuse of notation, define
$
f(\lambda|\mu)
:=
{p(\lambda)q(\mu|\lambda)}/{c_\mu},
$
which is a normalized probability distribution over $\lambda$. Then
$
\tau_{i|\mu}
=
\int d\lambda\,
f(\lambda|\mu)
|e_i^\lambda\rangle\langle e_i^\lambda|.
$ For each fixed $\lambda$, the operators
$\{
|e_i^\lambda\rangle\langle e_i^\lambda|
\}_{i=1}^{d}$
form a rank-one projective measurement, while
$f(\lambda|\mu)$
defines a probability distribution over $\lambda$. Therefore,
$\{\tau_{i|\mu}\}_{i=1}^{d}$
is a convex mixture of rank-one projective measurements and hence is simulable by rank-one projective measurements~\cite{2017oszmaniecSimulatingPositiveoperatorvalued}.

\setcounter{equation}{0}
\renewcommand\theequation{B\arabic{equation}}
\section{Proof of Completeness of the Rank-One Projective-Simulability Hierarchy}
\label{appB}

In this appendix, we prove Theorem~\hyperlink{theorem1}{1}. The argument proceeds in two steps. We first consider the larger set $\mathcal P(m,d)$ of projectively simulable POVMs with arbitrary projective measurements, without imposing the rank-one restriction. Building on the hierarchy construction introduced in Ref.~\cite{2025brinsterRobustCertification}, we establish a completeness result for the corresponding semidefinite relaxations $\{\mathcal P_\ell(m,d)\}_{\ell\ge2}$. This establishes asymptotic completeness for general projective simulability. We then refine the hierarchy to the rank-one setting relevant for the main text and prove Theorem~\hyperlink{theorem1}{1} by incorporating the trace-one constraints.

\subsection{Completeness of the Projective-Simulability Hierarchy}
\label{subappB}

We begin by recalling the larger set $\mathcal P(m,d)$ consisting of POVMs with $m$ outcomes acting on a $d$-dimensional Hilbert space that are simulable by arbitrary projective measurements~\cite{2017oszmaniecSimulatingPositiveoperatorvalued,2025khandelwalSimulatingQuantum,2026cobucciMaximallyNonprojective}. A POVM
$\mathbf M=\{M_i\}_{i=1}^{m}$
belongs to
$\mathcal P(m,d)$
if there exists a probability distribution
$\{w_k\}_k$
and projective measurements
$\{P_i^k\}_{i=1}^{m}$
such that
$
M_i=\sum_k w_k P_i^k
$
for all
$i$.
To characterize this set, we consider the hierarchy of semidefinite outer approximations
$\{\mathcal P_\ell(m,d)\}_{\ell\ge2}$
introduced in Ref.~\cite{2025brinsterRobustCertification}:
\begin{equation}
	\left\{
	\begin{aligned}
		&\mathrm{\Pi}^{(\ell)}_{\mathbf{i}} \succeq 0,
		\quad \forall\, \mathbf{i}\in[m]^\ell,\\[1mm]
		&\sum_{\mathbf{i}\setminus i_j}\mathrm{\Pi}^{(\ell)}_{\mathbf{i}}
		=\mathbb{I}_{d^{j-1}}\otimes M_{i_j}\otimes\mathbb{I}_{d^{\ell-j}},
		\quad \forall\, j\in[\ell],\ i_j\in[m],\\[1mm]
		&\mathrm{Tr}_\alpha\!\left(
		V_{(\alpha\beta)}\mathrm{\Pi}^{(\ell)}_{\mathbf{i}}
		\right)= 0,
		\quad \forall\, \mathbf{i},\alpha<\beta:i_\alpha\neq i_\beta .
	\end{aligned}
	\right.
\end{equation}
Our work here is to prove the asymptotic completeness of this hierarchy.

\begin{lemma}
	\label{lem:hierarchy1}
	Let $\mathbf M=\{M_i\}_{i=1}^{m}$ be a POVM on a $d$-dimensional Hilbert space. Then
	$
	\mathbf M\in\mathcal P_\ell(m,d)
	$
	for every
	$
	\ell\ge2
	$
	if and only if
	$
	\mathbf M\in\mathcal P(m,d).
	$
\end{lemma}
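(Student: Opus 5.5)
The plan has a soundness direction and a completeness direction. For soundness, suppose $\mathbf M=\sum_k w_k\mathbf P^k$ with each $\{P^k_i\}_i$ projective. Then
\[
\mathrm{\Pi}^{(\ell)}_{\mathbf i}:=\sum_k w_k\, P^k_{i_1}\otimes\cdots\otimes P^k_{i_\ell}
\]
is a feasible lift. Positivity is immediate. The marginal constraints follow from $\sum_i P^k_i=\mathbb I$. For the swap constraints we use $\mathrm{Tr}_\alpha\bigl(V_{(\alpha\beta)}(A_\alpha\otimes B_\beta)\bigr)=(BA)_\beta$ (or $AB$, depending on the ordering convention). This product vanishes because $P^k_{i_\alpha}P^k_{i_\beta}=0$ whenever $i_\alpha\neq i_\beta$.

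For completeness, assume $\mathbf M\in\mathcal P_\ell(m,d)$ for all $\ell$.

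\emph{Step 1: symmetrize.} Average each level-$\ell$ lift over the permutations $\pi\in S_\ell$, setting
\[
\tilde{\mathrm{\Pi}}_{\mathbf i}=\frac{1}{\ell!}\sum_\pi U_\pi\, \mathrm{\Pi}_{\pi^{-1}(\mathbf i)}\, U_\pi^\dagger .
\]
All constraints are permutation covariant, so feasibility is preserved. Next form the operator
\[
\Omega_\ell:=\sum_{\mathbf i}|\mathbf i\rangle\langle\mathbf i|\otimes\tilde{\mathrm{\Pi}}_{\mathbf i}/d^\ell
\]
on $(\mathbb C^m\otimes\mathbb C^d)^{\otimes\ell}$. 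Its trace is $\mathrm{Tr}\sum_{\mathbf i}\mathrm{\Pi}_{\mathbf i}=d^\ell$ after normalization, so $\Omega_\ell$ is a permutation-invariant state on $\ell$ copies of a classical-quantum system $\mathbb C^m\otimes\mathbb C^d$. Its single-copy marginal is $\omega_1=\sum_i|i\rangle\langle i|\otimes M_i/d$, i.e.\ the normalized Choi-type state of $\mathbf M$. Trace compatibility comes from Proposition 1's construction, with partial traces of lifts again giving lifts. Taking $\ell\to\infty$ and using compactness, a diagonal subsequence gives an infinitely exchangeable sequence whose two-copy marginals satisfy the swap constraint.

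\emph{Step 2: quantum de Finetti.} I would apply the finite de Finetti theorem (Christandl--König--Mitchison--Renner) to the $k$-marginals. It gives $\Omega^{(k)}\approx\int d\nu(\sigma)\,\sigma^{\otimes k}$ with error $O(k\,(md)^2/\ell)$. In the limit this yields an exact representation $\Omega^{(2)}=\int d\nu(\sigma)\,\sigma^{\otimes2}$ and $\omega_1=\int d\nu(\sigma)\,\sigma$. Since every $\Omega^{(k)}$ is block diagonal in the classical register, we may take $\sigma=\sum_i|i\rangle\langle i|\otimes\sigma_i$ with $\sigma_i\succeq0$. We also have
\[
\mathrm{Tr}_{\text{classical}}\,\Omega^{(k)}=\mathbb I^{\otimes k}/d^k .
\]
This is a product of maximally mixed states, and it forces $\nu$-a.e.\ that $\sum_i\sigma_i=\mathbb I/d$. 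The argument: the variance of $\sum_i\sigma_i$ vanishes, because $\int(\sum_i\sigma_i)^{\otimes2}d\nu=\mathbb I^{\otimes2}/d^2$ together with a Cauchy--Schwarz type argument on $\mathrm{Tr}$ of the square.

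\emph{Step 3: the swap constraint forces projectivity.} This is the main obstacle. From the swap constraint on two copies, $\int d\nu\,\sigma_{i}\sigma_{i'}=0$ for $i\neq i'$, up to ordering. Taking the trace gives
\[
\int d\nu\,\mathrm{Tr}(\sigma_i\sigma_{i'})=0 .
\]
Each integrand $\mathrm{Tr}(\sigma_i\sigma_{i'})$ is nonnegative because both operators are PSD, so $\sigma_i\sigma_{i'}=0$ for $\nu$-a.e.\ $\sigma$. Hence for a.e.\ $\sigma$ the operators $d\,\sigma_i$ are PSD with pairwise orthogonal supports and sum to $\mathbb I$. That makes them orthogonal projectors, so $\{d\sigma_i\}$ is a projective measurement. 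Then
\[
M_i=d\int d\nu\,\sigma_i
\]
is a mixture of projective measurements. Carathéodory's theorem reduces this to a finite mixture, giving $\mathbf M\in\mathcal P(m,d)$.

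The delicate points are threefold. The first is ensuring the level-$\ell$ lifts are consistent across $\ell$; Step 1 sidesteps this by using, for each $\ell$, the marginals of a single feasible lift rather than assuming a tower. The second is controlling the de Finetti error, where I would pass to the limit of $2$-marginals with dimension-dependent bounds. The third is justifying the exchange of the linear swap constraint through the de Finetti limit; this holds because the constraint is linear and closed.
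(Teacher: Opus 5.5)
Your proof is correct and follows the same route as the paper. For the easy direction you use product lifts. For the converse you use a symmetrized classical--quantum state, the quantum de Finetti theorem, normalization $\sum_x\sigma_x=\mathbb{I}/d$ from the maximally mixed quantum marginal, and the two-copy swap constraint forcing $\mathrm{Tr}(\sigma_u\sigma_v)=0$ almost everywhere, hence projectivity. Several of your steps make explicit what the paper's proof only asserts: the diagonal-subsequence construction of a consistent exchangeable family, the purity argument $\int \mathrm{Tr}\bigl[(\sum_x\sigma_x)^2\bigr]\,d\nu = 1/d$ for normalization, and the closing Carath\'eodory step.
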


\begin{proof}[Proof of Lemma~\ref{lem:hierarchy1}]
	
	The ``if'' direction is immediate. Suppose that $\mathbf{M}=\{M_i\}_{i=1}^m \in \mathcal{P}(m,d)$. By definition, there exist weights $w_k\ge 0$ with $\sum_k w_k=1$, and projective measurements $\mathbf{P}^{\,k}=\{P_i^k\}_{i=1}^m$, such that
	\begin{equation}\label{}
	M_i=\sum_k w_k P_i^k,\quad \forall\, i\in[m].
	\end{equation}
	For each $\ell$, define
	\begin{equation}\label{}
	\mathrm{\Pi}_{\mathbf{i}}^{(\ell)}
	:=
	\sum_k w_k \, P_{i_1}^k\otimes P_{i_2}^k\otimes \cdots \otimes P_{i_\ell}^k,
	\quad \mathbf{i}=(i_1,\dots,i_\ell)\in[m]^\ell.
	\end{equation}
	Clearly, $\mathrm{\Pi}_{\mathbf{i}}^{(\ell)}\succeq 0$. Moreover,
	\begin{equation}\label{}
	\sum_{\mathbf{i}\setminus i_j}\mathrm{\Pi}_{\mathbf{i}}^{(\ell)}
	=
	\sum_k w_k
	\left(\sum_{i_1} P_{i_1}^k\right)\otimes \cdots \otimes P_{i_j}^k \otimes \cdots \otimes \left(\sum_{i_\ell} P_{i_\ell}^k\right)
	=
	\mathbb{I}_{d^{j-1}}\otimes M_{i_j}\otimes \mathbb{I}_{d^{\ell-j}},
	\end{equation}
	so the marginal constraint is satisfied. Finally, if $i_\alpha\neq i_\beta$, then
	\begin{equation}\label{}
	\begin{aligned}
		\mathrm{Tr}_\alpha\!\left(V_{(\alpha\beta)}\mathrm{\Pi}_{\mathbf{i}}^{(\ell)}\right)
		&=
		\sum_k w_k\,
		\mathrm{Tr}_\alpha\!\left[
		V_{(\alpha\beta)}
		\left(
		P_{i_\alpha}^k\otimes P_{i_\beta}^k
		\right)
		\right]
		\otimes
		\bigotimes_{\gamma\neq \alpha,\beta} P_{i_\gamma}^k \\
		&=
		\sum_k w_k\,
		\left(P_{i_\alpha}^k P_{i_\beta}^k\right)
		\otimes
		\bigotimes_{\gamma\neq \alpha,\beta} P_{i_\gamma}^k
		=
		0,
	\end{aligned}
	\end{equation}
	because the effects of a projective measurement are pairwise orthogonal. Hence $\mathbf{M}\in\mathcal{P}_\ell(m,d)$ for every $\ell\geq 2$.
	
	We now prove the converse direction. Assume that $\mathbf{M}\in\mathcal{P}_\ell(m,d)$ for every $\ell\geq 2$. For each $\ell$, choose a feasible family $\{\mathrm{\Pi}_{\mathbf{i}}^{(\ell)}\}_{\mathbf{i}\in[m]^\ell}$, and define the unnormalized operator
	\begin{equation}\label{}
	\mathrm{\Omega}^{(\ell)}
	:=
	\sum_{\mathbf{i}\in[m]^\ell}
	|\mathbf{i}\rangle\langle \mathbf{i}|_C \otimes \mathrm{\Pi}_{\mathbf{i}}^{(\ell)},
	\end{equation}
	where $C=C_1\otimes\cdots\otimes C_\ell$ is a classical register of dimension $m^\ell$, and $Q=Q_1\otimes\cdots\otimes Q_\ell$ is a quantum register of dimension $d^\ell$. By the marginal constraint 
	$
	\sum_{\mathbf{i}\in[m]^\ell}\mathrm{\Pi}_{\mathbf{i}}^{(\ell)}=\mathbb{I}_{d^\ell},
	$
	and therefore
	\begin{equation}\label{}
	\mathrm{Tr}\!\left(\mathrm{\Omega}^{(\ell)}\right)
	=
	\sum_{\mathbf{i}\in[m]^\ell}\mathrm{Tr}\!\left(\mathrm{\Pi}_{\mathbf{i}}^{(\ell)}\right)
	=
	d^\ell.
	\end{equation}
	Hence, we define the normalized $\ell$-partite CQ density operator as
	\begin{equation}\label{}
	\varrho^{(\ell)}
	:=
	\frac{1}{d^\ell}\mathrm{\Omega}^{(\ell)}
	=
	\frac{1}{d^\ell}
	\sum_{\mathbf{i}\in[m]^\ell}
	|\mathbf{i}\rangle\langle \mathbf{i}|_C \otimes \mathrm{\Pi}_{\mathbf{i}}^{(\ell)}.
	\end{equation}
	
	By Lemma~\ref{lem:symmetrization}, we may replace $\{\mathrm{\Pi}_{\mathbf{i}}^{(\ell)}\}$ by its symmetrized version without affecting feasibility. Thus, without loss of generality, we assume that each $\varrho^{(\ell)}$ is permutation invariant. Since $\varrho^{(\ell)}$ is an $\ell$-partite CQ state on the global space $(C\otimes Q)^{\otimes \ell}$, the sequence $\{\varrho^{(\ell)}\}_{\ell\ge 2}$ becomes infinitely exchangeable in the limit $\ell\to\infty$. The quantum de Finetti theorem \cite{2002cavesUnknownQuantum} then implies that, for any fixed positive integer $k$ with $k\ll \ell$, if we denote by $\varrho_k^{(\ell)}:=\mathrm{Tr}_{k+1,\dots,\ell}(\varrho^{(\ell)})$ the reduced state of $\varrho^{(\ell)}$ on the first $k$ subsystems $(C_1\otimes Q_1)\otimes\cdots\otimes(C_k\otimes Q_k)$, then
	\begin{equation}\label{quantumdeFinetti}
	\lim_{\ell\to\infty}\varrho_k^{(\ell)}=\int_{\mathcal{D}(CQ)} d\eta(\varsigma)\,\varsigma^{\otimes k},
	\end{equation}
	where $\varsigma\in\mathcal{D}(CQ)$ is a density operator on the single-copy space $C_1\otimes Q_1$, and $\eta$ is the corresponding probability measure. Moreover, for every finite $\ell$, the global state $\varrho^{(\ell)}$ is strictly diagonal on the classical registers $C^{\otimes \ell}$. Since neither taking reduced states nor the asymptotic limit changes this CQ structure, the measure $\eta$ must be supported entirely on single-copy CQ states. Accordingly, $\varsigma$ necessarily takes the form
	\begin{equation}\label{}
	\varsigma=\sum_{x=1}^m |x\rangle\langle x|\otimes \varsigma_x,
	\end{equation}
	where each $\varsigma_x$ is a subnormalized positive semidefinite operator on $Q_1$, i.e., $\varsigma_x\succeq 0$, and satisfies $\mathrm{Tr}(\sum_{x=1}^m \varsigma_x)=1$. 
			
	We first identify the one-copy marginal. From the definition of $\varrho^{(\ell)}$ and the marginal constraint,
	\begin{equation}\label{onecopylhs}
	\begin{aligned}
		\varrho_1^{(\ell)}
		&=
		\mathrm{Tr}_{C_2Q_2\cdots C_\ell Q_\ell}\!\left(
		\frac{1}{d^\ell}
		\sum_{\mathbf{i}}
		|\mathbf{i}\rangle\langle \mathbf{i}|_C \otimes \mathrm{\Pi}_{\mathbf{i}}^{(\ell)}
		\right) \\
		&=
		\frac{1}{d^\ell}
		\sum_{i_1=1}^m
		|i_1\rangle\langle i_1|
		\otimes
		\mathrm{Tr}_{Q_2\cdots Q_\ell}
		\left(
		\sum_{\mathbf{i}\setminus i_1}\mathrm{\Pi}_{\mathbf{i}}^{(\ell)}
		\right) \\
		&=
		\frac{1}{d^\ell}
		\sum_{x=1}^m
		|x\rangle\langle x|\otimes
		\mathrm{Tr}_{Q_2\cdots Q_\ell}
		\left(
		M_x\otimes \mathbb{I}_{d^{\ell-1}}
		\right) \\
		&=
		\frac{1}{d}
		\sum_{x=1}^m |x\rangle\langle x|\otimes M_x.
	\end{aligned}
	\end{equation}
	On the other hand, the quantum de Finetti representation in Eq.~\eqref{quantumdeFinetti} with $k=1$ gives
	\begin{equation}\label{onecopyrhs}
	\lim_{\ell\to\infty}\varrho_1^{(\ell)}
	=
	\int d\eta(\varsigma)\,\varsigma
	=
	\sum_{x=1}^m |x\rangle\langle x|\otimes \left(\int d\eta(\varsigma)\,\varsigma_x\right).
	\end{equation}
	Comparing Eqs.~\eqref{onecopylhs} and \eqref{onecopyrhs}, we obtain
	\begin{equation}\label{}
	M_x
	=
	d\int d\eta(\varsigma)\,\varsigma_x,\quad \forall\, x \in[m].
	\end{equation}
	Define
	$
	N_x:=d\,\varsigma_x.
	$
	By Lemma~\ref{lem:definetti-normalization}, for $\eta$-almost every $\varsigma$, the family $\{N_x\}_{x=1}^m$ forms a valid $m$-outcome POVM, i.e.,
	\begin{equation}\label{}
	N_x\succeq 0,
	\quad
	\sum_{x=1}^m N_x=\mathbb{I}_d.
	\end{equation}
	This shows that the collection $N=\{N_x\}_{x=1}^m$ indeed forms a valid $m$-outcome POVM. We may therefore equivalently rewrite the measure $\eta(\varsigma)$ as a measure $\eta(N)$ supported on the set $\mathbb{M}(m,d)$ of all valid POVMs. The target POVM $\mathbf{M}$ can then be expressed as a convex mixture of single-copy POVMs
	\begin{equation}\label{}
	M_x=\int_{\mathbb{M}(m,d)} d\eta(N)\,N_x,\quad \forall\, x \in[m].
	\end{equation}
	
	At this point, by setting $k=2$, we can write the two-body limiting reduced state explicitly as
	\begin{equation}\label{twobodylimit}
	\lim_{\ell\to\infty}\varrho_2^{(\ell)}
	=
	\frac{1}{d^2}\int_{\mathbb{M}(m,d)} d\eta(N)\sum_{u,v=1}^m
	\left(|u\rangle\langle u|\otimes |v\rangle\langle v|\right)\otimes (N_u\otimes N_v),
	\end{equation}
	which provides the starting point for eliminating the cross terms via the swap constraint.	To exploit the vanishing swap constraint in the SDP hierarchy, we introduce a global observable on the bipartite space $(C_1\otimes Q_1)\otimes (C_2\otimes Q_2)$. For any two distinct outcome labels $u\neq v\in[m]$, define
	$
	O_{u,v}:=\left(|u\rangle\langle u|_{C_1}\otimes |v\rangle\langle v|_{C_2}\right)\otimes V_{Q_1Q_2},
	$
	where $V_{Q_1Q_2}$ is the swap operator acting on the two quantum copies. We now compute its expectation value on the two-body reduced state $\varrho_2^{(\ell)}$ for finite $\ell\ge 2$. From the definition of the global CQ state, tracing out the last $\ell-2$ subsystems gives
	\begin{equation}\label{}
	\varrho_2^{(\ell)}
	=
	\frac{1}{d^\ell}\sum_{i_1,i_2=1}^m
	\left(|i_1\rangle\langle i_1|\otimes |i_2\rangle\langle i_2|\right)\otimes
	\mathrm{Tr}_{Q_3\cdots Q_\ell}
	\left(\sum_{\mathbf{i}\setminus i_1,i_2} {\mathrm{\Pi}}_{\mathbf{i}}^{(\ell)}\right).
	\end{equation}
	Since the classical basis $\{|i\rangle\}$ is orthogonal, in evaluating $\mathrm{Tr}(O_{u,v}\varrho_2^{(\ell)})$ only the terms with $i_1=u$ and $i_2=v$ contribute. Therefore,
	\begin{equation}\label{}
	\begin{aligned}
		\mathrm{Tr}(O_{u,v}\varrho_2^{(\ell)})
		&=
		\frac{1}{d^\ell}
		\mathrm{Tr}_{Q_1Q_2}\!\left[
		V_{Q_1Q_2}\,
		\mathrm{Tr}_{Q_3\cdots Q_\ell}
		\left(\sum_{\mathbf{i}\setminus i_1=u,\; i_2=v} {\mathrm{\Pi}}_{\mathbf{i}}^{(\ell)}\right)
		\right] \\
		&=
		\frac{1}{d^\ell}
		\sum_{\mathbf{i}\setminus i_1=u,\; i_2=v}
		\mathrm{Tr}_{Q_1\cdots Q_\ell}\!\left[
		\left(V_{Q_1Q_2}\otimes \mathbb{I}_{Q_3\cdots Q_\ell}\right)
		 {\mathrm{\Pi}}_{\mathbf{i}}^{(\ell)}
		\right].
	\end{aligned}
	\end{equation}
	Because $u\neq v$, every tensor $ {\mathrm{\Pi}}_{\mathbf{i}}^{(\ell)}$ appearing in the sum satisfies $i_1\neq i_2$. By the defining constraint of $\mathcal{P}_\ell(m,d)$, for every such $\mathbf{i}$ we have $\mathrm{Tr}_{Q_1}\!\left(V_{Q_1Q_2} {\mathrm{\Pi}}_{\mathbf{i}}^{(\ell)}\right)=0$. Hence its full trace also vanishes, and we conclude that for every finite $\ell$ and every $u\neq v$,
	\begin{equation}\label{}
	\mathrm{Tr}(O_{u,v}\varrho_2^{(\ell)})=0.
	\end{equation}
	
	We now take the asymptotic limit $\ell\to\infty$. Since the trace is a continuous linear functional, the limit of the expectation values equals the expectation value on the limiting state:
	\begin{equation}\label{}
	\lim_{\ell\to\infty}\mathrm{Tr}(O_{u,v}\varrho_2^{(\ell)})
	=
	\mathrm{Tr}\!\left(
	O_{u,v}\left[\lim_{\ell\to\infty}\varrho_2^{(\ell)}\right]
	\right)
	=
	0.
	\end{equation}
	Substituting the two-body limiting state derived from the quantum de Finetti theorem, namely Eq.~\eqref{twobodylimit}, into the above expression, and using the orthogonality of the classical basis so that only the term $u'=u$ and $v'=v$ survives, we obtain
	\begin{equation}\label{}
	\frac{1}{d^2}\int_{\mathbb{M}(m,d)} d\eta(N)\,
	\mathrm{Tr}_{Q_1Q_2}\!\left[
	V_{Q_1Q_2}(N_u\otimes N_v)
	\right]
	=
	0.
	\end{equation}
	Using the identity $\mathrm{Tr}_{AB}[V_{AB}(A\otimes B)]=\mathrm{Tr}(AB)$, this reduces to
	\begin{equation}\label{}
	\int_{\mathbb{M}(m,d)} d\eta(N)\,\mathrm{Tr}(N_uN_v)=0,
	\quad \forall\,u\neq v.
	\end{equation}
	
	Now $\mathbb{M}(m,d)$ is precisely the set of all valid $m$-outcome POVMs. Hence, for every $N\in\mathbb{M}(m,d)$, the effects are positive semidefinite: $N_u\succeq 0$ and $N_v\succeq 0$. It follows that $\mathrm{Tr}(N_uN_v)\ge 0$. Since the integrand is nonnegative everywhere and its integral vanishes, basic measure theory implies that
	\begin{equation}\label{}
	\mathrm{Tr}(N_uN_v)=0
	\quad
	(\eta\text{-a.e.}),\quad \forall\,u\neq v.
	\end{equation}
	For positive semidefinite operators, vanishing overlap trace is equivalent to orthogonality, and therefore
	\begin{equation}\label{}
	N_uN_v=0
	\quad
	(\eta\text{-a.e.}),\quad \forall\,u\neq v.
	\end{equation}
	Using the POVM completeness relation $\sum_{x=1}^m N_x=\mathbb{I}_d$, we further obtain, for each $k$, $N_k=N_k\mathbb{I}_d=N_k(\sum_{x=1}^m N_x)=N_k^2+\sum_{x\neq k}N_kN_x$. Since all cross terms vanish $\eta$-almost everywhere, it follows that
	\begin{equation}\label{}
	N_k=N_k^2
	\quad
	(\eta\text{-a.e.}).
	\end{equation}
	An operator that is both positive semidefinite and idempotent must be an orthogonal projector. Therefore, in the limit $\ell\to\infty$, the de Finetti measure $\eta(N)$ is supported entirely on projective measurements.
	
	Finally, since
	\begin{equation}\label{}
	M_x=\int_{\mathbb{M}(m,d)} d\eta(N)\,N_x,\quad \forall\, x \in [d],
	\end{equation}
	and the measure $\eta$ is supported on projective measurements, the target POVM $\{M_x\}$ is exactly a convex mixture of projective measurements. Hence $\mathbf{M}\in\mathcal{P}(m,d)$.

\end{proof}

\begin{lemma}
	\label{lem:symmetrization}
	Let $S_\ell$ be the symmetric group on $\ell$ elements. Given any feasible solution $\{\mathrm{\Pi}_{\mathbf{i}}\}_{\mathbf{i}}$ for $\mathcal{P}_\ell(m,d)$, define its symmetrization by group averaging over $S_\ell$:
	\begin{equation}\label{}
	\widetilde{\mathrm{\Pi}}_{\mathbf{i}}:=\frac{1}{\ell!}\sum_{\pi\in S_\ell} W_\pi \mathrm{\Pi}_{\pi^{-1}(\mathbf{i})} W_\pi^\dagger.
	\end{equation}
	Then $\{\widetilde{\mathrm{\Pi}}_{\mathbf{i}}\}_{\mathbf{i}}$ remains a feasible solution to $\mathcal{P}_\ell(m,d)$, i.e., it satisfies the positivity, marginal, and swap constraints listed in Eq.~\eqref{equ:liftedconstraint}. Moreover, the CQ state constructed from $\{\widetilde{\mathrm{\Pi}}_{\mathbf{i}}\}_{\mathbf{i}}$ is exactly permutation invariant, and hence exchangeable.
\end{lemma}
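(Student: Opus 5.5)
The proof is a direct verification of each constraint for the averaged family. I first fix the conventions. For $\pi\in S_\ell$, $W_\pi$ is the unitary permuting the $\ell$ tensor factors of $(\mathbb C^d)^{\otimes\ell}$, so that $W_\pi(A_1\otimes\cdots\otimes A_\ell)W_\pi^\dagger=A_{\pi^{-1}(1)}\otimes\cdots\otimes A_{\pi^{-1}(\ell)}$. The index permutation $\pi^{-1}(\mathbf i)$ is chosen consistently, i.e. $(\pi^{-1}(\mathbf i))_j=i_{\pi(j)}$, so that the operator placed in slot $j$ after conjugation carries the label $i_j$. With these conventions the map $\{\Pi_{\mathbf i}\}\mapsto\{W_\pi\Pi_{\pi^{-1}(\mathbf i)}W_\pi^\dagger\}$ is a group action of $S_\ell$ on families. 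All the constraints are linear, or convex (positivity), so it suffices to show that each single transformed family $\{\Pi^\pi_{\mathbf i}\}:=\{W_\pi\Pi_{\pi^{-1}(\mathbf i)}W_\pi^\dagger\}$ is feasible. Convex averaging then preserves feasibility.

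\textbf{Positivity.} This is immediate, since unitary conjugation preserves $\succeq0$. (In the rank-one variant, the trace-one condition is preserved for the same reason.)

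\textbf{Marginal constraint.} Fix $j$ and $i_j$. Summing $\Pi^\pi_{\mathbf i}$ over $\mathbf i\setminus i_j$ and reindexing by $\mathbf k=\pi^{-1}(\mathbf i)$ turns the sum into one over $\mathbf k\setminus k_{\pi^{-1}(j)}$ with $k_{\pi^{-1}(j)}=i_j$ held fixed. The original feasibility then gives
\[
W_\pi\bigl(\mathbb I\otimes\cdots\otimes M_{i_j}\ (\text{slot }\pi^{-1}(j))\otimes\cdots\otimes\mathbb I\bigr)W_\pi^\dagger .
\]
By the convention for $W_\pi$, this equals $\mathbb I_{d^{j-1}}\otimes M_{i_j}\otimes\mathbb I_{d^{\ell-j}}$.

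\textbf{Swap constraint.} Fix $\alpha<\beta$ with $i_\alpha\neq i_\beta$. I use the covariance $W_\pi^\dagger V_{(\alpha\beta)}W_\pi=V_{(\pi^{-1}(\alpha)\pi^{-1}(\beta))}$, together with the fact that $\mathrm{Tr}_\alpha$ of a conjugated operator equals a permutation-conjugated $\mathrm{Tr}_{\pi^{-1}(\alpha)}$ of the original. These give
\[
\mathrm{Tr}_\alpha\bigl(V_{(\alpha\beta)}\Pi^\pi_{\mathbf i}\bigr)=W'\,\mathrm{Tr}_{\alpha'}\bigl(V_{(\alpha'\beta')}\Pi_{\mathbf k}\bigr)\,W'^\dagger ,
\]
where $\mathbf k=\pi^{-1}(\mathbf i)$, $\alpha'=\pi^{-1}(\alpha)$, $\beta'=\pi^{-1}(\beta)$, and $W'$ is a permutation unitary on the remaining $\ell-1$ factors. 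The labels satisfy $k_{\alpha'}=i_\alpha\neq i_\beta=k_{\beta'}$.

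One subtlety is that $\alpha'$ may exceed $\beta'$, while the constraint is stated only for $\alpha<\beta$, with the trace taken over the smaller index. I expect this to be the main technical obstacle. I handle it by noting that $V_{(\alpha\beta)}=V_{(\beta\alpha)}$, and then showing that the condition with the trace over $\alpha$ implies the one with the trace over $\beta$. Write $\Pi=\sum_n A_n\otimes B_n\otimes R_n$ on the factors $\alpha,\beta$ and the rest. Then $\mathrm{Tr}_\alpha(V\Pi)=\sum_n A_nB_n\otimes R_n$ and $\mathrm{Tr}_\beta(V\Pi)=\sum_n B_nA_n\otimes R_n$, up to which slot the output occupies. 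Taking adjoints and using $\Pi=\Pi^\dagger$ shows that the vanishing of one is equivalent to the vanishing of the other. So the swap constraint is symmetric in $(\alpha,\beta)$, and the transformed family satisfies it.

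\textbf{Permutation invariance.} I will show that the CQ state
\[
\tilde\varrho=d^{-\ell}\sum_{\mathbf i}|\mathbf i\rangle\langle\mathbf i|\otimes\tilde\Pi_{\mathbf i}
\]
satisfies $(U_\sigma\otimes W_\sigma)\tilde\varrho(U_\sigma\otimes W_\sigma)^\dagger=\tilde\varrho$ for every $\sigma\in S_\ell$, where $U_\sigma$ permutes the classical registers. First, $U_\sigma|\mathbf i\rangle=|\sigma(\mathbf i)\rangle$ with the same index convention. Then, after relabelling, one has $W_\sigma\tilde\Pi_{\mathbf i}W_\sigma^\dagger=\tilde\Pi_{\sigma(\mathbf i)}$, because $\sigma\pi$ runs over $S_\ell$ as $\pi$ does (the group-action property fixed at the start). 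Hence $\tilde\varrho$ is invariant under permutations of the $\ell$ subsystems $C_jQ_j$, i.e. it is exchangeable.
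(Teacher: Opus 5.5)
Your proposal is correct and follows essentially the same route as the paper. Both arguments check positivity, the marginal constraints and the swap constraints for each conjugated term, using $W_\pi V_{(uv)}W_\pi^\dagger=V_{(\alpha\beta)}$ and the covariance of partial traces, and both get permutation invariance of the CQ state from $W_\sigma\widetilde{\Pi}_{\mathbf i}W_\sigma^\dagger=\widetilde{\Pi}_{\sigma(\mathbf i)}$. One thing you do beyond the paper: you show that, for Hermitian $\Pi$, $\mathrm{Tr}_\alpha(V_{(\alpha\beta)}\Pi)=0$ is equivalent to $\mathrm{Tr}_\beta(V_{(\alpha\beta)}\Pi)=0$. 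This closes a small point the paper passes over, since $u=\pi^{-1}(\alpha)$ can exceed $v=\pi^{-1}(\beta)$, while the constraint is only imposed with the trace over the smaller index.
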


\begin{proof}[Proof of Lemma~\ref{lem:symmetrization}]
	For any permutation $\pi\in S_\ell$, let $\pi(\mathbf{i})$ denote the permuted outcome string whose $k$th component is $(\pi(\mathbf{i}))_k=i_{\pi^{-1}(k)}$. Let $W_\pi$ be the natural unitary representation of $S_\ell$ on $\mathcal{H}^{\otimes \ell}$, characterized by
	\begin{equation}\label{} 
	W_\pi (A_1\otimes\cdots\otimes A_\ell) W_\pi^\dagger
	=
	A_{\pi^{-1}(1)}\otimes\cdots\otimes A_{\pi^{-1}(\ell)}.
	\end{equation}
	We verify in turn that $\{\widetilde{\mathrm{\Pi}}_{\mathbf{i}}\}_{\mathbf{i}}$ is still a valid solution to $\mathcal{P}_\ell(m,d)$.
	
	First, positivity is immediate. Since each $\mathrm{\Pi}_{\pi^{-1}(\mathbf{i})}\succeq 0$ and $W_\pi$ is unitary, we have
	\begin{equation}\label{} 
		W_\pi \mathrm{\Pi}_{\pi^{-1}(\mathbf{i})} W_\pi^\dagger \succeq 0.
	\end{equation}
	Hence $\widetilde{\mathrm{\Pi}}_{\mathbf{i}}\succeq 0$ as a finite average of positive semidefinite operators.
	
	Next, we prove the marginal condition. For convenience, write 
	$
	M_x^{(j)}:=\mathbb{I}_{d^{j-1}}\otimes M_x\otimes \mathbb{I}_{d^{\ell-j}}.
	$
	We must show that, for every position $j$ and every outcome $x$,
	\begin{equation}\label{} 
		\sum_{\mathbf{i} \setminus i_j=x}\widetilde{\mathrm{\Pi}}_{\mathbf{i}}=M_x^{(j)}.
	\end{equation}
	Substituting the definition of $\widetilde{\mathrm{\Pi}}_{\mathbf{i}}$ gives
	\begin{equation}\label{}
	\sum_{\mathbf{i}\setminus i_j=x}\widetilde{\mathrm{\Pi}}_{\mathbf{i}}
	=
	\frac{1}{\ell!}\sum_{\pi\in S_\ell}
	W_\pi
	\left(
	\sum_{\mathbf{i}\setminus i_j=x}\mathrm{\Pi}_{\pi^{-1}(\mathbf{i})}
	\right)
	W_\pi^\dagger.
	\end{equation}
	Now set $\mathbf{k}=\pi^{-1}(\mathbf{i})$. The condition $i_j=x$ is equivalent to $k_{\pi^{-1}(j)}=x$. Writing $p=\pi^{-1}(j)$, the marginal constraint for the original solution yields
	$
	\sum_{\mathbf{k}\setminus k_p=x}\mathrm{\Pi}_{\mathbf{k}}=M_x^{(p)}.
	$ 
	Therefore,
	\begin{equation}\label{}
	\sum_{\mathbf{i}\setminus i_j=x}\widetilde{\mathrm{\Pi}}_{\mathbf{i}}
	=
	\frac{1}{\ell!}\sum_{\pi\in S_\ell}
	W_\pi M_x^{(p)} W_\pi^\dagger.
	\end{equation}
	Since $p=\pi^{-1}(j)$, we have $\pi(p)=j$, and thus
	$
	W_\pi M_x^{(p)} W_\pi^\dagger=M_x^{(j)}.
	$
	Each term in the average is therefore exactly $M_x^{(j)}$, which proves the marginal condition.
	
	We now verify the swap constraint. Suppose that $i_\alpha\neq i_\beta$, then
	\begin{equation}\label{}
	\mathrm{Tr}_\alpha\!\left(V_{(\alpha\beta)}\widetilde{\mathrm{\Pi}}_{\mathbf{i}}\right)
	=
	\frac{1}{\ell!}\sum_{\pi\in S_\ell}
	\mathrm{Tr}_\alpha\!\left[
	V_{(\alpha\beta)}W_\pi \mathrm{\Pi}_{\pi^{-1}(\mathbf{i})} W_\pi^\dagger
	\right].
	\end{equation}
	Let $\mathbf{k}=\pi^{-1}(\mathbf{i})$, and define $u=\pi^{-1}(\alpha)$ and $v=\pi^{-1}(\beta)$. Since $i_\alpha\neq i_\beta$, it follows that $k_u\neq k_v$. Using  the algebraic relations of the permutation group, the swap operator $V_{(uv)}$ and the global permutation unitary $W_\pi$ satisfy the identity
	\begin{equation}\label{}
	W_\pi V_{(u v)} W_\pi^\dagger = V_{(\alpha\beta)},
	\end{equation}
	or equivalently,
	\begin{equation}\label{}
	V_{(\alpha\beta)}W_\pi = W_\pi V_{(u v)},
	\end{equation}
	we obtain
	\begin{equation}\label{}
	\mathrm{Tr}_\alpha\!\left[
	V_{(\alpha\beta)}W_\pi \mathrm{\Pi}_{\mathbf{k}} W_\pi^\dagger
	\right]
	=
	\mathrm{Tr}_\alpha\!\left[
	W_\pi (V_{(u v)}\mathrm{\Pi}_{\mathbf{k}}) W_\pi^\dagger
	\right].
	\end{equation}
	By Lemma \ref{lem:partial-trace-covariance}, taking the partial trace over the $\alpha$th subsystem of a quantum state after a global permutation $\pi$ is exactly equivalent to first taking the partial trace over the $\pi^{-1}(\alpha)$th subsystem of the original state, namely the $u$th subsystem, and then applying the induced permutation on the remaining $\ell-1$ subsystems, i.e.,
	\begin{equation}\label{}
	\mathrm{Tr}_\alpha \left[ W_\pi (V_{(uv)} \mathrm{\Pi}_{\mathbf{k}}) W_\pi^\dagger \right] = \widetilde{W}_\pi \left[ \mathrm{Tr}_u (V_{(uv)} \mathrm{\Pi}_{\mathbf{k}}) \right] \widetilde{W}_\pi^\dagger,
	\end{equation}
	where $\widetilde{W}_\pi$ denotes the induced permutation unitary on the remaining $\ell-1$ tensor factors. Since $k_u\neq k_v$, the swap constraint for the original solution implies 
	$
	\mathrm{Tr}_u\!\left(V_{(u v)}\mathrm{\Pi}_{\mathbf{k}}\right)=0.
	$
	Hence every term in the above average vanishes, and therefore
	$
	\mathrm{Tr}_\alpha\!\left(V_{(\alpha\beta)}\widetilde{\mathrm{\Pi}}_{\mathbf{i}}\right)=0.
	$
	This proves the swap constraint.
	
	It remains to show permutation invariance of the associated CQ state. First observe that, for any $\varsigma\in S_\ell$,
	\begin{equation}\label{}
	W_\varsigma \widetilde{\mathrm{\Pi}}_{\mathbf{i}} W_\varsigma^\dagger
	=
	\frac{1}{\ell!}\sum_{\pi\in S_\ell}
	W_\varsigma W_\pi \mathrm{\Pi}_{\pi^{-1}(\mathbf{i})} W_\pi^\dagger W_\varsigma^\dagger
	=
	\frac{1}{\ell!}\sum_{\pi'\in S_\ell}
	W_{\pi'} \mathrm{\Pi}_{\pi'^{-1}(\varsigma(\mathbf{i}))} W_{\pi'}^\dagger
	=
	\widetilde{\mathrm{\Pi}}_{\varsigma(\mathbf{i})},
	\end{equation}
	where in the second equality we set $\pi'=\varsigma\pi$. By closure of the group, summing over all $\pi$ is equivalent to summing over all $\pi'$. We now construct the CQ state
	\begin{equation}\label{}
	\varrho^{(\ell)}:=\frac{1}{d^\ell}\sum_{\mathbf{i}\in[m]^\ell}
	|\mathbf{i}\rangle\langle \mathbf{i}|_C\otimes \widetilde{\mathrm{\Pi}}_{\mathbf{i}}.
	\end{equation}
	Let $\mathcal{U}_\varsigma:=P_\varsigma\otimes W_\varsigma$, where $P_\varsigma|\mathbf{i}\rangle=|\varsigma(\mathbf{i})\rangle$ permutes the classical register. Then
	\begin{equation}\label{}
	\begin{aligned}
		\mathcal{U}_\varsigma \varrho^{(\ell)} \mathcal{U}_\varsigma^\dagger
		&=
		\frac{1}{d^\ell}\sum_{\mathbf{i}}
		\left(P_\varsigma |\mathbf{i}\rangle\langle \mathbf{i}| P_\varsigma^\dagger\right)
		\otimes
		\left(W_\varsigma \widetilde{\mathrm{\Pi}}_{\mathbf{i}} W_\varsigma^\dagger\right)\\
		&=
		\frac{1}{d^\ell}\sum_{\mathbf{i}}
		|\varsigma(\mathbf{i})\rangle\langle \varsigma(\mathbf{i})|
		\otimes
		\widetilde{\mathrm{\Pi}}_{\varsigma(\mathbf{i})}.
	\end{aligned}
	\end{equation}
	Since $\varsigma$ acts bijectively on $[m]^\ell$, relabeling $\mathbf{j}=\varsigma(\mathbf{i})$ yields
	\begin{equation}\label{}
	\mathcal{U}_\varsigma \varrho^{(\ell)} \mathcal{U}_\varsigma^\dagger
	=
	\frac{1}{d^\ell}\sum_{\mathbf{j}}
	|\mathbf{j}\rangle\langle \mathbf{j}|
	\otimes
	\widetilde{\mathrm{\Pi}}_{\mathbf{j}}
	=
	\varrho^{(\ell)}.
	\end{equation}
	Hence $\varrho^{(\ell)}$ is exactly permutation invariant, i.e., exchangeable.
\end{proof}

\begin{lemma}
	\label{lem:partial-trace-covariance}
	Let $O$ be an arbitrary linear operator acting on $\mathcal{H}^{\otimes \ell}$, and let $W_\pi$ denote the unitary representation of a permutation $\pi\in S_\ell$. Then
	\begin{equation}\label{}
	\mathrm{Tr}_\alpha\!\left[ W_\pi O W_\pi^\dagger \right]
	=
	\widetilde{W}_\pi \,\mathrm{Tr}_u(O)\,\widetilde{W}_\pi^\dagger,
	\end{equation}
	where $u=\pi^{-1}(\alpha)$, and $\widetilde{W}_\pi$ is the induced permutation unitary acting on the remaining $\ell-1$ tensor factors.
\end{lemma}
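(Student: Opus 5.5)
The identity is linear in $O$, and both sides are continuous linear maps on $\mathcal{L}(\mathcal{H}^{\otimes\ell})$. I therefore plan to verify it only on product operators $O=A_1\otimes\cdots\otimes A_\ell$, which span the whole operator space. Extending by linearity then gives the statement for arbitrary $O$, including non-Hermitian operators such as $V_{(uv)}\mathrm{\Pi}_{\mathbf k}$, which is the case used in Lemma~\ref{lem:symmetrization}.

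For a product operator, the defining relation of $W_\pi$ (stated in the proof of Lemma~\ref{lem:symmetrization}) gives
\[
W_\pi O W_\pi^\dagger=A_{\pi^{-1}(1)}\otimes\cdots\otimes A_{\pi^{-1}(\ell)} .
\]
The factor in slot $\alpha$ is $A_{\pi^{-1}(\alpha)}=A_u$. Taking $\mathrm{Tr}_\alpha$ therefore yields
\[
\mathrm{Tr}(A_u)\,\bigotimes_{k\neq\alpha}A_{\pi^{-1}(k)} ,
\]
with the remaining slots kept in their natural order. On the other side, $\mathrm{Tr}_u(O)=\mathrm{Tr}(A_u)\bigotimes_{k'\neq u}A_{k'}$.

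Next I will define $\widetilde W_\pi$ explicitly. Let $\iota_u:[\ell-1]\to[\ell]\setminus\{u\}$ and $\iota_\alpha:[\ell-1]\to[\ell]\setminus\{\alpha\}$ be the order-preserving bijections. Since $\pi$ maps $[\ell]\setminus\{u\}$ bijectively onto $[\ell]\setminus\{\alpha\}$, the induced permutation is
\[
\tilde\pi:=\iota_\alpha^{-1}\circ\pi\circ\iota_u\in S_{\ell-1} ,
\]
and I set $\widetilde W_\pi:=W_{\tilde\pi}$ on $\mathcal{H}^{\otimes(\ell-1)}$. With this choice, $W_{\tilde\pi}$ moves the factor sitting in reduced position $\iota_u^{-1}(k')$ to reduced position $\iota_\alpha^{-1}(\pi(k'))$. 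Equivalently, the slot of the reduced $\alpha$-side corresponding to $k\neq\alpha$ receives $A_{\pi^{-1}(k)}$. This matches the expression above factor by factor, with the same scalar $\mathrm{Tr}(A_u)$.

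The only real obstacle is bookkeeping. One must fix conventions consistently: $W_\pi$ moves factor $k$ to slot $\pi(k)$, and the reduced space is labelled by order-preserving relabelling. One must also check that $\tilde\pi$ is a genuine permutation, i.e.\ that $\pi(u)=\alpha$ ensures $\pi$ restricts correctly to the punctured index sets. Once this is written down, the proof is a direct check on product operators, and the lemma follows by linearity.
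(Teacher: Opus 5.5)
Your proposal is correct and follows the same route as the paper's proof: reduce by linearity to product operators $A_1\otimes\cdots\otimes A_\ell$, compute $\mathrm{Tr}_\alpha$ of the permuted product and $\mathrm{Tr}_u$ of the original (both carry the scalar $\mathrm{Tr}(A_u)$), and match the remaining $\ell-1$ factors via the permutation induced by $\pi\colon[\ell]\setminus\{u\}\to[\ell]\setminus\{\alpha\}$. Your explicit choice $\tilde\pi=\iota_\alpha^{-1}\circ\pi\circ\iota_u$, with consistent slot bookkeeping, makes precise what the paper only calls ``the corresponding permutation unitary''; it adds no new idea.
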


\begin{proof}[Proof of Lemma~\ref{lem:partial-trace-covariance}]
	Since any operator on $\mathcal{H}^{\otimes \ell}$ can be written as a linear combination of product operators, and both partial trace and conjugation are linear, it suffices to prove the identity for
	\begin{equation}\label{}
	O=A_1\otimes A_2\otimes\cdots\otimes A_\ell.
	\end{equation}
	By definition of $W_\pi$, we have
	$
	W_\pi O W_\pi^\dagger
	=
	A_{\pi^{-1}(1)}\otimes A_{\pi^{-1}(2)}\otimes\cdots\otimes A_{\pi^{-1}(\ell)}.
	$
	Taking the partial trace over the $\alpha$th subsystem yields
	\begin{equation}\label{}
	\mathrm{Tr}_\alpha\!\left[ W_\pi O W_\pi^\dagger \right]
	=
	\mathrm{Tr}\!\left(A_{\pi^{-1}(\alpha)}\right)\cdot 
	\bigotimes_{j\neq \alpha} A_{\pi^{-1}(j)}.
	\end{equation}
	Writing $u=\pi^{-1}(\alpha)$, this becomes
	\begin{equation}\label{lhs}
	\mathrm{Tr}_\alpha\!\left[ W_\pi O W_\pi^\dagger \right]
	=
	\mathrm{Tr}(A_u) \cdot  \bigotimes_{j\neq \alpha} A_{\pi^{-1}(j)}.
	\end{equation}
	
	On the other hand, tracing out the $u$th subsystem of the original operator gives
	\begin{equation}\label{}
	\mathrm{Tr}_u(O)
	=
	\mathrm{Tr}(A_u) \cdot \bigotimes_{k\neq u} A_k.
	\end{equation}
	Since $\pi(u)=\alpha$, the permutation $\pi$ induces a bijection from $\{1,\ldots,\ell\}\setminus\{u\}$ to $\{1,\ldots,\ell\}\setminus\{\alpha\}$. Let $\widetilde{W}_\pi$ denote the corresponding permutation unitary on the remaining $\ell-1$ tensor factors. Its action is
	\begin{equation}\label{}
	\widetilde{W}_\pi
	\left(\bigotimes_{k\neq u} A_k\right)
	\widetilde{W}_\pi^\dagger
	=
	\bigotimes_{j\neq \alpha} A_{\pi^{-1}(j)}.
	\end{equation}
	Therefore,
	\begin{equation}\label{rhs}
	\widetilde{W}_\pi \,\mathrm{Tr}_u(O)\,\widetilde{W}_\pi^\dagger
	=
	\mathrm{Tr}(A_u) \cdot \bigotimes_{j\neq \alpha} A_{\pi^{-1}(j)}.
	\end{equation}
	Comparing Eqs.~\eqref{lhs} and \eqref{rhs}, we obtain
	\begin{equation}\label{}
	\mathrm{Tr}_\alpha\!\left[ W_\pi O W_\pi^\dagger \right]
	=
	\widetilde{W}_\pi \,\mathrm{Tr}_u(O)\,\widetilde{W}_\pi^\dagger.
	\end{equation}
	By linearity, the identity holds for every operator $O$ on $\mathcal{H}^{\otimes \ell}$.
\end{proof}

\begin{lemma}
	\label{lem:definetti-normalization}
	Let $\{\varrho^{(\ell)}\}_{\ell\ge 2}$ be the exchangeable CQ-state family constructed from symmetric feasible solutions $\{\mathrm{\Pi}_{\mathbf{i}}^{(\ell)}\}_{\mathbf{i}\in[m]^\ell}$ of $\mathcal{P}_\ell(m,d)$, and let $\varrho^{(\ell)}=\int d\eta(\varsigma)\,\varsigma^{\otimes \ell}$ be its quantum de Finetti representation, where $\eta$ is supported on single-copy CQ states $\varsigma=\sum_{x=1}^m |x\rangle\langle x|\otimes \varsigma_x$. Then, for $\eta$-almost every $\varsigma$, the operators $N_x:=d\,\varsigma_x$ form a valid POVM.
\end{lemma}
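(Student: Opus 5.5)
The idea is to use the marginal constraints of the hierarchy, which hold exactly at every level $\ell$. These constraints force the $Q$-marginal of every symmetric CQ state $\varrho^{(\ell)}$ to be maximally mixed, and more strongly they fix the $Q$-marginals on any $k$ copies. We transfer this information to the de Finetti measure and then use a variance (second-moment) argument to conclude that $\sum_x \varsigma_x=\mathbb I_d/d$ for $\eta$-almost every $\varsigma$. Positivity $N_x=d\,\varsigma_x\succeq0$ is automatic, since $\eta$ is supported on CQ states whose blocks are positive semidefinite. Only the completeness relation $\sum_x N_x=\mathbb I_d$ needs proof.

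\textbf{Step 1: exact marginals at finite $\ell$.} Summing the marginal constraint of $\mathcal P_\ell(m,d)$ over $i_j$ gives $\sum_{\mathbf i}\mathrm{\Pi}^{(\ell)}_{\mathbf i}=\mathbb I_{d^\ell}$. Tracing out the classical registers therefore yields $\mathrm{Tr}_C\,\varrho^{(\ell)}=\mathbb I_{d^\ell}/d^\ell$. Taking partial traces, every $k$-copy reduced state $\varrho^{(\ell)}_k$ satisfies $\mathrm{Tr}_{C_1\cdots C_k}\varrho^{(\ell)}_k=(\mathbb I_d/d)^{\otimes k}$. This holds for every $\ell\ge k$, so it also holds in the limit of Eq.~\eqref{quantumdeFinetti}. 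Writing $\bar\varsigma:=\mathrm{Tr}_C\varsigma=\sum_x\varsigma_x$ and using $k=1,2$, we get
\begin{equation}
\int d\eta(\varsigma)\,\bar\varsigma=\frac{\mathbb I_d}{d},\qquad \int d\eta(\varsigma)\,\bar\varsigma\otimes\bar\varsigma=\frac{\mathbb I_d}{d}\otimes\frac{\mathbb I_d}{d}.
\end{equation}

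\textbf{Step 2: variance argument.} Set $\Delta(\varsigma):=\bar\varsigma-\mathbb I_d/d$, which is Hermitian. We test the two-copy identity against the swap operator $V$ and use $\mathrm{Tr}[V(A\otimes B)]=\mathrm{Tr}(AB)$. This gives $\int d\eta\,\mathrm{Tr}(\bar\varsigma^2)=1/d$. Combining this with the first-moment identity,
\begin{equation}
\int d\eta(\varsigma)\,\mathrm{Tr}\bigl(\Delta(\varsigma)^2\bigr)=\frac1d-\frac2d\,\mathrm{Tr}\!\Bigl(\int d\eta\,\bar\varsigma\Bigr)\Big/ d\cdot d+\frac1d=\frac1d-\frac2d+\frac1d=0.
\end{equation}
The integrand $\mathrm{Tr}(\Delta^2)$ is a Hilbert--Schmidt norm and hence nonnegative. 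Its integral vanishes, so $\Delta(\varsigma)=0$ for $\eta$-almost every $\varsigma$. In other words, $\sum_x N_x=d\,\bar\varsigma=\mathbb I_d$ holds $\eta$-a.e., which proves the lemma. The same trick, testing an exact two-copy identity with $V$ to obtain a vanishing nonnegative integrand, is the one already used for the swap constraint in the proof of Lemma~\ref{lem:hierarchy1}.

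\textbf{Main obstacle.} The delicate point is justifying the passage to the limit. We must ensure that the finite-$\ell$ marginal identities transfer to the de Finetti measure, and that the measure is the same for $k=1$ and $k=2$. I would handle this as follows. Permutation invariance (Lemma~\ref{lem:symmetrization}) together with the quantum de Finetti theorem in its finite form (Christandl--K\"onig--Mitchison--Renner) bounds the trace-distance error of $\varrho^{(\ell)}_k$ by $O(k\,(dm)^2/\ell)$. Then extract a single weak-$*$ convergent subsequence of the approximating measures on the compact set of CQ states. Along this subsequence the $k=1$ and $k=2$ marginals converge simultaneously to moments of one common $\eta$, so the identities of Step 1 pass to the limit by continuity of the finitely many linear functionals involved. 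Support on CQ states follows because the classical dephasing map is continuous and fixes each $\varrho^{(\ell)}$.
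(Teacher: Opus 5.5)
Your proof is correct and follows the paper's route. The marginal constraints force $\mathrm{Tr}_C\varrho^{(\ell)}=(\mathbb{I}_d/d)^{\otimes\ell}$, this is transferred to the de Finetti measure to get $\sum_x\varsigma_x=\mathbb{I}_d/d$ for $\eta$-almost every $\varsigma$, and positivity of $N_x$ comes from the CQ blocks.

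The only difference is the final step. The paper invokes a general fact: a distribution whose tensor moments all equal the tensor powers of its mean is a Dirac measure. You instead use just the two-copy moment tested against the swap, showing that the Hilbert--Schmidt variance $\int d\eta\,\mathrm{Tr}(\Delta^2)$ vanishes. That is a more explicit, self-contained justification of the same step. Your displayed middle expression is garbled, but the result $1/d-2/d+1/d=0$ is correct.
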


\begin{proof}
	By definition of the CQ state,
	\begin{equation}\label{}
	\varrho^{(\ell)}=\frac{1}{d^\ell}\sum_{\mathbf{i}} |\mathbf{i}\rangle\langle\mathbf{i}|_C \otimes  {\mathrm{\Pi}}_{\mathbf{i}}^{(\ell)}.
	\end{equation}
	Taking the partial trace over all classical registers $C_1,\dots,C_\ell$, we obtain the quantum marginal
	$\varrho_Q^{(\ell)}:=\mathrm{Tr}_C(\varrho^{(\ell)})=\frac{1}{d^\ell}\sum_{\mathbf{i}} {\mathrm{\Pi}}_{\mathbf{i}}^{(\ell)}$.
	By the SDP constraint $\sum_{\mathbf{i}\setminus i_j} {\mathrm{\Pi}}_{\mathbf{i}}^{(\ell)}=\mathbb{I}_{d^{j-1}}\otimes M_{i_j}\otimes \mathbb{I}_{d^{\ell-j}}$, summing further over $i_j$ and using the completeness relation $\sum_x M_x=\mathbb{I}_d$, we obtain $\sum_{\mathbf{i}} {\mathrm{\Pi}}_{\mathbf{i}}^{(\ell)}=\mathbb{I}_{d^\ell}$. Therefore, for every $\ell\ge 2$,
	\begin{equation}\label{}
	\varrho_Q^{(\ell)}=\frac{\mathbb{I}_{d^\ell}}{d^\ell}=\left(\frac{\mathbb{I}_d}{d}\right)^{\otimes \ell}.
	\end{equation}
	
	Now consider the quantum de Finetti representation of the exchangeable limit of $\varrho^{(\ell)}$, namely $\varrho^{(\ell)}=\int d\eta(\varsigma)\,\varsigma^{\otimes \ell}$. Applying the partial trace only to the quantum subsystem, and defining the one-copy quantum marginal by $\varsigma_Q:=\mathrm{Tr}_C(\varsigma)=\sum_{x=1}^m \varsigma_x$, we obtain
	\begin{equation}\label{}
	\int d\eta(\varsigma)\,\varsigma_Q^{\otimes \ell}
	=
	\left(\frac{\mathbb{I}_d}{d}\right)^{\otimes \ell},
	\quad \forall\,\ell.
	\end{equation}
	In probability and measure theory~\cite{1995billingsleyProbabilityMeasure}, if all tensor moments of a distribution coincide with the corresponding tensor powers of its mean, then the distribution must be a Dirac measure concentrated at that mean. Hence the support of $\eta$ is contained in states satisfying $\varsigma_Q=\mathbb{I}_d/d$ $(\eta\text{-a.e.})$. Substituting the definition $\varsigma_Q=\sum_{x=1}^m \varsigma_x$, we obtain $\sum_{x=1}^m \varsigma_x=\mathbb{I}_d/d$ $(\eta\text{-a.e.})$. Now define $N_x:=d\,\varsigma_x$. Multiplying the identity $\sum_{x=1}^m \varsigma_x=\mathbb{I}_d/d$ by $d$, we obtain $\sum_{x=1}^m N_x=\mathbb{I}_d$ $(\eta\text{-a.e.})$. Finally, since $\varsigma=\sum_{x=1}^m |x\rangle\langle x|\otimes \varsigma_x$ is a CQ state, each block $\varsigma_x$ is positive semidefinite, and therefore $N_x=d\,\varsigma_x\succeq 0$ for all $x\in[m]$. Thus, for $\eta$-almost every $\varsigma$, the family $\{N_x\}_{x=1}^m$ forms a valid $m$-outcome POVM.
\end{proof}

\subsection{Proof of Theorem~\protect\hyperlink{theorem1}{1}}

The proof follows the same de Finetti strategy as Lemma~\ref{lem:hierarchy1}. The only new ingredient is that the additional constraints $\mathrm{Tr}(\mathrm{\Pi}_{\mathbf{i}}^{(\ell)})=1$ fix the classical marginal of the exchangeable CQ state, which in turn forces the hidden projective measurement to be rank one.

\begin{proof}[Proof of Theorem~\hyperlink{theorem1}{1}]
	The ``if'' direction is immediate. Suppose that $\mathbf{M}=\{M_i\}_{i=1}^d\in\mathcal{P}(d,d;1)$. By definition, there exist weights $w_k\ge 0$ with $\sum_k w_k=1$ and rank-one projective measurements $\mathbf{P}^{\,k}=\{P_i^k\}_{i=1}^d$ such that $M_i=\sum_k w_k P_i^k$ for all $i\in[d]$. For each $\ell$, define
	\begin{equation}\label{} 
	\mathrm{\Pi}_{\mathbf{i}}^{(\ell)}:=\sum_k w_k\, P_{i_1}^k\otimes\cdots\otimes P_{i_\ell}^k,
	\quad
	\mathbf{i}=(i_1,\dots,i_\ell)\in[d]^\ell.
	\end{equation}
	As in the proof of Lemma~\ref{lem:hierarchy1}, the family $\{\mathrm{\Pi}_{\mathbf{i}}^{(\ell)}\}_{\mathbf{i}}$ satisfies all standard constraints in Eq.~\eqref{equ:liftedconstraint}. Moreover, since each $P_i^k$ is rank one, $\mathrm{Tr}(P_i^k)=1$, and hence
	\begin{equation}\label{} 
	\mathrm{Tr}\!\left(\mathrm{\Pi}_{\mathbf{i}}^{(\ell)}\right)
	=
	\sum_k w_k \prod_{j=1}^\ell \mathrm{Tr}(P_{i_j}^k)
	=
	\sum_k w_k
	=
	1
	\end{equation}
	for every $\mathbf{i}\in[d]^\ell$. Therefore $\mathbf{M}\in\mathcal{P}_\ell(d,d;1)$ for every $\ell\ge 2$.
	
	We now prove the converse. Assume that $\mathbf{M}=\{M_i\}_{i=1}^d$ belongs to $\mathcal{P}_\ell(d,d;1)$ for every $\ell\ge 2$. Since $\mathcal{P}_\ell(d,d;1)\subseteq \mathcal{P}_\ell(d,d)$, Lemma~\ref{lem:hierarchy1} already implies that $\mathbf{M}\in\mathcal{P}(d,d)$, i.e., $\mathbf{M}$ is projectively simulable. It remains to show that the hidden projective measurements can in fact be chosen rank one.
	
	As in Appendix~\ref{subappB}, we first symmetrize the lifted operators without affecting feasibility. Thus, for each $\ell$, we may assume that the feasible family $\{\mathrm{\Pi}_{\mathbf{i}}^{(\ell)}\}_{\mathbf{i}\in[d]^\ell}$ is permutation covariant. Define the corresponding exchangeable CQ state
	\begin{equation}\label{} 
	\rho^{(\ell)}
	:=
	\frac{1}{d^\ell}\sum_{\mathbf{i}\in[d]^\ell}
	|\mathbf{i}\rangle\langle \mathbf{i}|_C\otimes \mathrm{\Pi}_{\mathbf{i}}^{(\ell)}.
	\end{equation}
	By the quantum de Finetti theorem, after passing to the infinite exchangeable limit, the $k$-body marginals admit a representation
	\begin{equation}\label{} 
	\rho_k^{(\ell)} \longrightarrow \int d\eta(\varsigma)\,\varsigma^{\otimes k},
	\end{equation}
	where the measure $\eta$ is supported on single-copy CQ states of the form
	$
	\varsigma=\sum_{x=1}^d |x\rangle\langle x|\otimes \varsigma_x.
	$
	Exactly as in Appendix~\ref{subappB}, defining $N_x:=d\,\varsigma_x$ yields a POVM $N=\{N_x\}_{x=1}^d$ for $\eta$-almost every $\varsigma$, and the one-copy and two-copy marginals show that $N$ is in fact a projective measurement $\eta$-almost everywhere. Thus the only remaining point is to prove that each $N_x$ has unit trace.
	
	This is precisely where the additional constraints $\mathrm{Tr}(\mathrm{\Pi}_{\mathbf{i}}^{(\ell)})=1$ enter. Tracing out the quantum registers of $\rho^{(\ell)}$, we obtain the classical marginal
	\begin{equation}\label{} 
	\rho_C^{(\ell)}
	:=
	\mathrm{Tr}_Q\!\left(\rho^{(\ell)}\right)
	=
	\frac{1}{d^\ell}\sum_{\mathbf{i}\in[d]^\ell}
	\mathrm{Tr}\!\left(\mathrm{\Pi}_{\mathbf{i}}^{(\ell)}\right)
	|\mathbf{i}\rangle\langle \mathbf{i}|_C
	=
	\frac{1}{d^\ell}\sum_{\mathbf{i}\in[d]^\ell}
	|\mathbf{i}\rangle\langle \mathbf{i}|_C
	=
	\left(\frac{\mathbb{I}_d}{d}\right)^{\otimes \ell},
	\end{equation}
	where the last equality uses $\mathrm{Tr}(\mathrm{\Pi}_{\mathbf{i}}^{(\ell)})=1$ for all $\mathbf{i}$.
	
	On the other hand, for a single-copy CQ state $\varsigma=\sum_{x=1}^d |x\rangle\langle x|\otimes \varsigma_x$, its classical marginal is
	$
	\varsigma_C:=\mathrm{Tr}_Q(\varsigma)=\sum_{x=1}^d \mathrm{Tr}(\varsigma_x)\,|x\rangle\langle x|.
	$
	Taking the partial trace over the quantum subsystem in the quantum de Finetti representation gives
	\begin{equation}\label{} 
	\int d\eta(\varsigma)\,\varsigma_C^{\otimes \ell}
	=
	\left(\frac{\mathbb{I}_d}{d}\right)^{\otimes \ell},
	\quad \forall\,\ell.
	\end{equation}
	By the same moment argument used in Lemma~\ref{lem:definetti-normalization}, this implies that $\eta$ is supported on states satisfying
	\begin{equation}\label{} 
	\varsigma_C=\frac{\mathbb{I}_d}{d}
	\quad
	(\eta\text{-a.e.}).
	\end{equation}
	Equivalently,
	\begin{equation}\label{} 
	\mathrm{Tr}(\varsigma_x)=\frac{1}{d},
	\quad \forall\,x\in[d],
	\quad
	(\eta\text{-a.e.}).
	\end{equation}
	Since $N_x=d\,\varsigma_x$, we conclude that
	\begin{equation}\label{} 
	\mathrm{Tr}(N_x)=1,
	\quad \forall\,x\in[d],
	\quad
	(\eta\text{-a.e.}).
	\end{equation}
	
	But Appendix~\ref{subappB} already showed that $N=\{N_x\}_{x=1}^d$ is a projective measurement $\eta$-almost everywhere. Therefore each $N_x$ is a projector with trace one, and hence has rank one. It follows that the de Finetti measure is supported on rank-one projective measurements only. Finally, since the one-copy marginal still satisfies
	\begin{equation}\label{} 
	M_x=\int d\eta(N)\,N_x,
	\quad \forall\,x\in[d],
	\end{equation}
	the POVM $\mathbf{M}$ is a convex mixture of rank-one projective measurements. Hence $\mathbf{M}\in\mathcal{P}(d,d;1)$.
\end{proof}

\setcounter{equation}{0}
\renewcommand\theequation{C\arabic{equation}}

\section{Numerical characterization of $\mathcal{C}_\ell(d,n)$ and computational complexity}
\label{appC}

In this Appendix we discuss the numerical implementation of the SDP hierarchy for $\mathcal{C}_\ell(d,n)$ and the associated computational tradeoffs. The starting point is the decomposition
$
\tau_{i,\mu}:=\int d\lambda\, p(\lambda)q(\mu|\lambda)\,|e_i^\lambda\rangle\langle e_i^\lambda|
$
or, equivalently, its normalized form $\tau_{i|\mu}$ displayed in Eq.~\eqref{equ:normalizedoperator}. A key structural feature is that the rank-one projectors $\{|e_i^\lambda\rangle\langle e_i^\lambda|\}_{i=1}^d$ depend on the hidden index $\lambda$, but not on $\mu$. Thus, different deterministic assignments $\mu$ modify the mixing weights, but not the underlying family of $d$-outcome rank-one projective measurements.

This observation leads to two implementation strategies. The first is the canonical lift, used in the main text, which imposes the conic lifted constraints separately for each $\mu$ and introduces no relaxation beyond the level-$\ell$ approximation itself. The second is the aggregated relaxation, which first sums over $\mu$ and then imposes a single lifted constraint. The latter is computationally cheaper but gives a weaker outer approximation. We discuss the two approaches in turn.

\subsection{Canonical lift}
\label{appC:canonical}

Suppose that for each deterministic assignment $\mu$ we are given a $d$-outcome POVM
$
\boldsymbol{\tau}_{\mu}
=
\{\tau_{i|\mu}\}_{i=1}^d,
$
satisfying
$
\mathrm{Tr}(\tau_{i|\mu})=1,
$
$
\sum_{i=1}^d\tau_{i|\mu}=\mathbb{I}_d,
$
and
$
\tau_{i|\mu}\succeq 0.
$
The relevant question is whether these POVMs can all be simulated by a common family of rank-one projective measurements, namely whether there exist weights $f(\lambda|\mu)$ such that
\begin{equation}
	\label{equ:projectivesimu}
	\tau_{i|\mu}
	=
	\int d \lambda \, f(\lambda|\mu)\,P_i^\lambda,
	\quad \forall\, i,\mu,
\end{equation}
where, for every $\lambda$, the family $\{P_i^\lambda\}_{i=1}^d$ is a rank-one projective measurement.

At first sight, Eq.~\eqref{equ:projectivesimu} appears stronger than testing each $\mu$ separately, because it asks the same hidden projective measurements to work for all $\mu$. This causes no loss of generality. If, for each $\mu$, one has an independent decomposition into rank-one projective measurements, then one may take the union of all projective measurements appearing across different $\mu$ and assign zero weight to those not used by a given sector. For example, if
$
\boldsymbol{\tau}_{1}
=
w_{1|1}\mathbf{P}^{1}
+
w_{2|1}\mathbf{P}^{2}
$
and
$
\boldsymbol{\tau}_{2}
=
w_{2|2}\mathbf{P}^{2}
+
w_{3|2}\mathbf{P}^{3},
$
then both can be rewritten using the common list
$
\{\mathbf{P}^{1},\mathbf{P}^{2},\mathbf{P}^{3}\}
$
by setting
$
(f_{\lambda|1})_\lambda
=
(w_{1|1},w_{2|1},0)
$
and
$
(f_{\lambda|2})_\lambda
=
(0,w_{2|2},w_{3|2}),
$
with
$
\sum_\lambda f_{\lambda|1}
=
\sum_\lambda f_{\lambda|2}
=
1.
$
Hence, testing projective simulability for each $\mu$ separately already guarantees the existence of a common hidden projective family after relabeling.

This observation leads to the canonical lift: for each deterministic assignment $\mu$, one imposes the level-$\ell$ constraints characterizing membership in $\mathcal{P}_\ell(d,d;1)$. Concretely, if there exist lifted operators $\{\mathrm{\Pi}_{\mathbf{i},\mu}^{(\ell)}\}_{\mathbf{i},\mu}$ such that
\begin{equation}
	\label{equ:strict-appC}
	\left\{
	\begin{aligned}
		& \mathrm{\Pi}_{\mathbf{i},\mu}^{(\ell)}\succeq 0,\quad
		\mathrm{Tr}(\mathrm{\Pi}_{\mathbf{i},\mu}^{(\ell)})=1,
		\quad \forall\,\mathbf{i},\mu,\\
		& \sum_{\mathbf{i}\setminus i_j}\mathrm{\Pi}_{\mathbf{i},\mu}^{(\ell)}
		=\mathbb{I}_{d^{j-1}}\otimes \tau_{i_j|\mu}\otimes \mathbb{I}_{d^{\ell-j}},
		\quad \forall\, j,\ i_j,\mu,\\
		& \mathrm{Tr}_\alpha\!\left(
		V_{(\alpha\beta)}\mathrm{\Pi}_{\mathbf{i},\mu}^{(\ell)}
		\right)=0,
		\quad \forall\,\mu,\ 1\le\alpha<\beta\le\ell,\ 
		\mathbf{i}\ \text{with}\ i_\alpha\neq i_\beta.
	\end{aligned}
	\right.
\end{equation}
then
$
\{\tau_{i|\mu}\}_{i=1}^d
\in
\mathcal{P}_\ell(d,d;1)
$
for every deterministic assignment $\mu$. This is precisely the membership test used in the main text, after passing from normalized POVM elements $\tau_{i|\mu}$ to their conic representatives $\tau_{i,\mu}=c_\mu\tau_{i|\mu}$. Thus, Eq.~\eqref{equ:mainsdp} is the conic version of imposing Eq.~\eqref{equ:strict-appC} separately for each deterministic assignment.

The advantage of the canonical lift is that no extra relaxation is introduced beyond the level-$\ell$ approximation already present in $\mathcal{P}_\ell(d,d;1)$. Its drawback is computational cost. The lifted variables are indexed by both $\mathbf{i}\in[d]^\ell$ and $\mu\in[d]^n$, so the number of positive semidefinite blocks is $d^{n+\ell}$, each of matrix size $d^\ell\times d^\ell$. Consequently, the total number of matrix elements associated with the lifted variables scales as
$$
O(d^n)\times O(d^\ell)\times O(d^{2\ell})
=
O(d^{\,n+3\ell}).
$$
In addition, the variables $\tau_{i,\mu}$ contribute $d^{n+1}$ positive semidefinite $d\times d$ matrices, i.e.,
$
O(d^{\,n+3})
$
scalar variables, which is typically negligible compared with the lifted part once $\ell\ge 2$. Thus the dominant cost comes from replicating the lifted hierarchy over all $d^n$ deterministic assignments $\mu$. This makes the canonical lift the tightest level-$\ell$ relaxation, albeit at a substantial computational cost, and in practice limits its applicability to modest values of $n$ and low hierarchy levels.

\subsection{Aggregated relaxation}
\label{appC:aggregated}

A cheaper alternative follows from the fact that each conic operator
$\tau_{i,\mu}=c_\mu\tau_{i|\mu}$
belongs to the convex cone generated by rank-one projectively simulable POVMs. Since this cone is convex, the sum over $\mu$ remains in the same cone. Define
$$
\bar{\tau}_i:=\sum_{\mu}\tau_{i,\mu}
=
\sum_\mu c_\mu\tau_{i|\mu}.
$$
Using a rank-one projective simulation model,
\[
\tau_{i,\mu}
=
\int d\lambda\,
p(\lambda)\,
q(\mu|\lambda)\,
|e_i^\lambda\rangle\langle e_i^\lambda|,
\]
we obtain
$$
\bar{\tau}_i
=
\int d\lambda\,p(\lambda)
\Bigl(\sum_\mu q(\mu|\lambda)\Bigr)
|e_i^\lambda\rangle\langle e_i^\lambda|
=
\int d\lambda\,p(\lambda)
|e_i^\lambda\rangle\langle e_i^\lambda|,
$$
because $\sum_\mu q(\mu|\lambda)=1$. Hence the aggregated operators
$\{\bar{\tau}_i\}_{i=1}^d$
form a $d$-outcome POVM simulable by rank-one projective measurements, i.e.,
$\{\bar{\tau}_i\}_{i=1}^d\in\mathcal{P}(d,d;1)$.

This motivates the aggregated relaxation, in which the lifted constraints are imposed only once, on the aggregate POVM
$\bar{\tau}:=\{\bar{\tau}_i\}_{i=1}^d$,
rather than separately for every $\mu$. At level $\ell$, one may introduce variables
$\{\tau_{i,\mu}\}_{i,\mu}$,
$\{c_\mu\}_\mu$,
$\{\bar{\tau}_i\}_i$,
and a single lifted family
$\{\bar{\mathrm{\Pi}}_{\mathbf{i}}^{(\ell)}\}_{\mathbf{i}\in[d]^\ell}$
satisfying
\begin{equation}
	\label{equ:sumrelax-appC}
	\left\{
	\begin{aligned}
		& \rho_x=\sum_{\mu,i}
		D(i|x,\mu)\,\tau_{i,\mu},
		\quad \forall\,x,\\
		& \bar{\tau}_i=\sum_\mu \tau_{i,\mu},
		\quad \forall\,i,\\
		& \tau_{i,\mu}\succeq 0,\quad
		\mathrm{Tr}(\tau_{i,\mu})=c_\mu,
		\quad \forall\,i,\mu,\\
		& c_\mu\ge 0,\quad
		\sum_{i=1}^{d}\tau_{i,\mu}
		=
		c_\mu\mathbb{I}_d,
		\quad \forall\,\mu,\\
		& \sum_\mu c_\mu=1,\\
		& \bar{\mathrm{\Pi}}_{\mathbf{i}}^{(\ell)}\succeq 0,\quad
		\mathrm{Tr}(\bar{\mathrm{\Pi}}_{\mathbf{i}}^{(\ell)})=1,
		\quad \forall\,\mathbf{i},\\
		& \sum_{\mathbf{i}\setminus i_j}
		\bar{\mathrm{\Pi}}_{\mathbf{i}}^{(\ell)}
		=
		\mathbb{I}_{d^{j-1}}
		\otimes
		\bar{\tau}_{i_j}
		\otimes
		\mathbb{I}_{d^{\ell-j}},
		\quad \forall\,j\in[\ell],\ i_j\in[d],\\
		& \mathrm{Tr}_\alpha\!\left(
		V_{(\alpha\beta)}
		\bar{\mathrm{\Pi}}_{\mathbf{i}}^{(\ell)}
		\right)=0,
		\quad
		\forall\,1\le\alpha<\beta\le\ell,\ 
		\mathbf{i}\ \text{with}\ i_\alpha\neq i_\beta .
	\end{aligned}
	\right.
\end{equation}

Any feasible point of the canonical lift induces a feasible point of Eq.~\eqref{equ:sumrelax-appC} by aggregating over the deterministic-assignment index $\mu$. Therefore, the aggregated relaxation defines a larger feasible region. Consequently, infeasibility of Eq.~\eqref{equ:sumrelax-appC} still rules out feasibility of the canonical lift, and hence rules out classical simulability; however, feasibility of the aggregated relaxation does not imply feasibility of the canonical lift.

The main benefit of Eq.~\eqref{equ:sumrelax-appC} is computational. The lifted part now consists of only $d^\ell$ positive semidefinite blocks of size $d^\ell\times d^\ell$, so the dominant number of scalar matrix entries drops to
$$
O(d^\ell)\times O(d^{2\ell})
=
O(d^{\,3\ell}),
$$
removing the exponential prefactor $d^n$ from the lifted sector. The number of remaining variables $\tau_{i,\mu}$ still scales as $O(d^{\,n+3})$, but this is typically much smaller than the cost of the full lifted hierarchy. Thus the aggregated relaxation is substantially cheaper in both memory and runtime, especially when $n$ is moderate or large.

The two relaxations therefore exhibit a clear tradeoff between tightness and computational cost. The canonical lift preserves the full deterministic-assignment structure and gives the sharpest level-$\ell$ outer approximation, at the cost of a lifted dimension scaling as $O(d^{\,n+3\ell})$. The aggregated relaxation replaces the family of $\mu$-resolved POVM constraints by a single aggregated one, reducing the dominant lifted cost to $O(d^{\,3\ell})$ but producing a weaker outer approximation. For this reason, the canonical lift is preferable when accuracy is essential and the problem size is manageable, whereas the aggregated relaxation provides an effective screening tool in larger-scale computations.

%Finally, the same distinction carries over to the dual side. The dual witness displayed in the main text is derived from the canonical primal SDP Eq.~\eqref{equ:mainsdp}; an analogous but weaker witness can be obtained from the dual of the aggregated relaxation Eq.~\eqref{equ:sumrelax-appC}. Thus, the two numerical strategies differ not only in primal feasibility testing, but also in the strength of the resulting certificates for failure of classical simulability.

\setcounter{equation}{0}
\renewcommand\theequation{D\arabic{equation}}

\section{Derivation of the dual form}
\label{appD}
In this Appendix we derive the dual SDP corresponding to the primal feasibility problem Eq.~\eqref{equ:mainsdp}. Throughout, the hierarchy level $\ell$ is fixed, and, to lighten notation, we omit the superscript $(\ell)$ on the lifted operators. We then work with the primal form
\begin{equation}
	\label{equ:primal-appD}
	\begin{aligned}
		\max\quad & 0 \\
		\mathrm{w.r.t.}\quad& {\{\tau_{i,\mu}\}_{i,\mu},\, \{c_\mu\}_\mu,\, \{\mathrm{\Pi}_{\mathbf{i},\mu}\}_{\mathbf{i},\mu}}, \\
		\mathrm{s.t.}\quad 
		& \rho_x=\sum_{\mu,i} D(i|x,\mu)\,\tau_{i,\mu},\quad \forall\,x,\\
		& \sum_{\mathbf{i}\setminus i_j}\mathrm{\Pi}_{\mathbf{i},\mu}
		=\mathbb{I}_{d^{j-1}}\otimes \tau_{i_j,\mu}\otimes \mathbb{I}_{d^{\ell-j}},
		\quad \forall\,j,\ i_j,\mu,\\
		& \mathrm{Tr}(\mathrm{\Pi}_{\mathbf{i},\mu})=c_\mu,\quad \forall\,\mathbf{i},\mu,\\
		& \mathrm{Tr}_\alpha\!\left(V_{(\alpha\beta)}\mathrm{\Pi}_{\mathbf{i},\mu}\right)=0,
		\quad \forall\,\mu,\ 1\le\alpha<\beta\le\ell,\ \mathbf{i}\ \text{with}\ i_\alpha\neq i_\beta,\\
		& \sum_{i} \tau_{i,\mu} = c_\mu \mathbb{I}_d, \quad\forall\,\mu,\\
		& \sum_{\mu} c_\mu=1,\\
		& \tau_{i,\mu}\succeq 0,\quad \mathrm{\Pi}_{\mathbf{i},\mu}\succeq 0,\quad c_\mu\ge 0,\quad \forall\,i,\mathbf{i},\mu.
	\end{aligned}
\end{equation}
This formulation is equivalent to the SDP in Eq.~\eqref{equ:mainsdp}; in particular, the scalar constraints $\mathrm{Tr}(\tau_{i,\mu})=c_\mu$ already follow from $\mathrm{Tr}(\mathrm{\Pi}_{\mathbf{i},\mu})=c_\mu$ together with the marginal constraints. To each equality constraint in Eq.~\eqref{equ:primal-appD} we assign a dual variable:
\begin{equation}
	\label{equ:dualvars-appD}
	\left\{
	\begin{aligned}
		& W_x\in\mathbb{H}_d
		&& \text{for } \rho_x-\sum_{\mu,i}D(i|x,\mu)\tau_{i,\mu}=0,\\
		& Z_{j,s,\mu}\in\mathbb{H}_{d^\ell}
		&& \text{for } \sum_{\mathbf{i}\setminus i_j=s}\mathrm{\Pi}_{\mathbf{i},\mu}
		-\mathbb{I}_{d^{j-1}}\otimes \tau_{s,\mu}\otimes \mathbb{I}_{d^{\ell-j}}=0,\\
		& y_{\mathbf{i},\mu}\in\mathbb{R}
		&& \text{for } \mathrm{Tr}(\mathrm{\Pi}_{\mathbf{i},\mu})-c_\mu=0,\\
		& Y_{\alpha\beta,\mathbf{i},\mu}\in\mathbb{C}^{d^{\ell-1}\times d^{\ell-1}}
		&& \text{for } \mathrm{Tr}_\alpha\!\left(V_{(\alpha\beta)}\mathrm{\Pi}_{\mathbf{i},\mu}\right)=0,\\
		&  S_\mu \in \mathbb{H}_d
		&& \text{for } \sum_{i} \tau_{i,\mu} - c_\mu \mathbb{I}_d = 0, \\
		& t\in\mathbb{R}
		&& \text{for } \sum_\mu c_\mu-1=0.
	\end{aligned}
	\right.
\end{equation}

The choice of a complex multiplier $Y_{\alpha\beta,\mathbf{i},\mu}$ requires a brief explanation. Although both $\mathrm{\Pi}_{\mathbf{i},\mu}$ and $V_{(\alpha\beta)}$ are Hermitian, the product $V_{(\alpha\beta)}\mathrm{\Pi}_{\mathbf{i},\mu}$ is not Hermitian in general, and therefore neither is its partial trace $\mathrm{Tr}_\alpha(V_{(\alpha\beta)}\mathrm{\Pi}_{\mathbf{i},\mu})$. To enforce the vanishing of this matrix constraint, one must constrain both its Hermitian and anti-Hermitian parts. This is achieved by pairing it with an unconstrained complex multiplier and adding the term together with its Hermitian conjugate (h.c.), thereby ensuring that the Lagrangian remains real. 

With the convention that each multiplier couples to ``left-hand side minus right-hand side,'' the Lagrangian reads
\begin{equation}
	\label{equ:lagrangian-appD}
	\begin{aligned}
		\mathcal{L}
		={}&
		\sum_x \mathrm{Tr}\!\left[
		W_x\left(
		\rho_x-\sum_{\mu,i} D(i|x,\mu)\tau_{i,\mu}
		\right)\right] \\
		&+
		\sum_{j,s,\mu}
		\mathrm{Tr}\!\left[
		Z_{j,s,\mu}
		\left(
		\sum_{\mathbf{i}\setminus i_j=s}\mathrm{\Pi}_{\mathbf{i},\mu}
		-\mathbb{I}_{d^{j-1}}\otimes \tau_{s,\mu}\otimes \mathbb{I}_{d^{\ell-j}}
		\right)
		\right] \\
		&+
		\sum_{\mathbf{i},\mu}
		y_{\mathbf{i},\mu}
		\left(
		\mathrm{Tr}(\mathrm{\Pi}_{\mathbf{i},\mu})-c_\mu
		\right) \\
		&+
		\sum_{\mathbf{i},\mu}
		\sum_{\substack{\alpha<\beta\\ i_\alpha\neq i_\beta}}
		\left(
		\mathrm{Tr}\!\left[
		Y_{\alpha\beta,\mathbf{i},\mu}\,
		\mathrm{Tr}_\alpha\!\left(V_{(\alpha\beta)}\mathrm{\Pi}_{\mathbf{i},\mu}\right)
		\right]
		+\mathrm{h.c.}
		\right) \\
		&+
		 \sum_{\mu} \mathrm{Tr}\left[ S_\mu \left( \sum_{i} \tau_{i,\mu} - c_\mu \mathbb{I}_d \right) \right] \\
		&+
		t\left(\sum_\mu c_\mu-1\right).
	\end{aligned}
\end{equation}

To rewrite the swap term in a form linear in $\mathrm{\Pi}_{\mathbf{i},\mu}$, we use the identity
\begin{equation}
	\label{equ:partial-trace-dual-identity}
	\mathrm{Tr}\!\left[ Y\,\mathrm{Tr}_\alpha(X)\right]
	=
	\mathrm{Tr}\!\left[(\mathbb{I}_\alpha\otimes Y)\,X\right],
\end{equation}
valid for any operator $X$ on $\mathcal{H}_\alpha\otimes \mathcal{H}_{\bar\alpha}$ and any operator $Y$ on $\mathcal{H}_{\bar\alpha}$. Applying Eq.~\eqref{equ:partial-trace-dual-identity} to the swap contribution gives
\begin{equation}
	\label{equ:D-term-appD}
	\mathrm{Tr}\!\left[
	Y_{\alpha\beta,\mathbf{i},\mu}\,
	\mathrm{Tr}_\alpha\!\left(V_{(\alpha\beta)}\mathrm{\Pi}_{\mathbf{i},\mu}\right)
	\right]
	+\mathrm{h.c.}
	=
	\mathrm{Tr}\!\left[
	\mathcal{D}_{\alpha\beta,\mathbf{i},\mu}\,\mathrm{\Pi}_{\mathbf{i},\mu}
	\right],
\end{equation}
where
$
	\mathcal{D}_{\alpha\beta,\mathbf{i},\mu}
	:=
	(\mathbb{I}_d^\alpha\otimes Y_{\alpha\beta,\mathbf{i},\mu})\,V_{(\alpha\beta)}
	+
	V_{(\alpha\beta)}\,(\mathbb{I}_d^\alpha\otimes Y_{\alpha\beta,\mathbf{i},\mu}^\dagger)
$
is Hermitian. Summing over all relevant pairs $(\alpha,\beta)$, we define
\begin{equation}
	\label{equ:D-total-appD}
	\mathcal{D}_{\mathbf{i},\mu}
	:=
	\sum_{\substack{\alpha<\beta\\ i_\alpha\neq i_\beta}}
	\mathcal{D}_{\alpha\beta,\mathbf{i},\mu}.
\end{equation}

For the marginal terms, we use
\begin{equation}
	\label{equ:Zj-marginal-identity}
	\mathrm{Tr}\!\left[
	Z_{j,s,\mu}
	\left(
	\mathbb{I}_{d^{j-1}}\otimes \tau_{s,\mu}\otimes \mathbb{I}_{d^{\ell-j}}
	\right)
	\right]
	=
	\mathrm{Tr}\!\left[
	\mathrm{Tr}_{\setminus j}(Z_{j,s,\mu})\,\tau_{s,\mu}
	\right],
\end{equation}
where $\mathrm{Tr}_{\setminus j}$ denotes the partial trace over all tensor factors except the $j$th one.

Using Eqs.~\eqref{equ:D-term-appD}--\eqref{equ:Zj-marginal-identity}, the Lagrangian can be regrouped according to the primal variables $\tau_{i,\mu}$, $\mathrm{\Pi}_{\mathbf{i},\mu}$, and $c_\mu$:
\begin{equation}
	\label{equ:lagrangian-grouped-appD}
	\begin{aligned}
		\mathcal{L}
		={}&
		\sum_x \mathrm{Tr}(W_x\rho_x)-t \\
		&+
		\sum_{\mu,i}
		\mathrm{Tr}\!\left[
		\tau_{i,\mu}
		\left(
		-\sum_x D(i|x,\mu)\,W_x
		-\sum_{j=1}^{\ell}\mathrm{Tr}_{\setminus j}(Z_{j,i,\mu})
		+ S_\mu
		\right)
		\right] \\
		&+
		\sum_{\mathbf{i},\mu}
		\mathrm{Tr}\!\left[
		\mathrm{\Pi}_{\mathbf{i},\mu}
		\left(
		\sum_{j=1}^{\ell} Z_{j,i_j,\mu}
		+y_{\mathbf{i},\mu}\,\mathbb{I}_{d^\ell}
		+\mathcal{D}_{\mathbf{i},\mu}
		\right)
		\right] \\
		&+
		\sum_\mu
		c_\mu
		\left(-\mathrm{Tr}(S_\mu)
		-\sum_{\mathbf{i}} y_{\mathbf{i},\mu}+t
		\right).
	\end{aligned}
\end{equation}

The dual function is obtained by maximizing Eq.~\eqref{equ:lagrangian-grouped-appD} over the primal cone $\tau_{i,\mu}\succeq 0$, $\mathrm{\Pi}_{\mathbf{i},\mu}\succeq 0$, and $c_\mu\ge 0$. This supremum is finite if and only if the coefficients of the primal variables obey the corresponding sign conditions:
\begin{equation}
	\label{equ:tau-coeff-appD}
	-\sum_x D(i|x,\mu)\,W_x
	-\sum_{j=1}^{\ell}\mathrm{Tr}_{\setminus j}(Z_{j,i,\mu})+ S_\mu
	\preceq 0,
	\quad \forall\,i,\mu,
\end{equation}
\begin{equation}
	\label{equ:Pi-coeff-appD}
	\sum_{j=1}^{\ell} Z_{j,i_j,\mu}
	+y_{\mathbf{i},\mu}\,\mathbb{I}_{d^\ell}
	+\mathcal{D}_{\mathbf{i},\mu}
	\preceq 0,
	\quad \forall\,\mathbf{i},\mu,
\end{equation}
and
\begin{equation}
	\label{equ:c-coeff-appD}
	-\mathrm{Tr}(S_\mu)-\sum_{\mathbf{i}} y_{\mathbf{i},\mu}+t \le 0,
	\quad \forall\,\mu.
\end{equation}
Since $D(i|x,\mu)=\delta_{i,\mu(x)}$, Eq.~\eqref{equ:tau-coeff-appD} may equivalently be written as
\begin{equation}
	\label{equ:tau-coeff-appD-alt}
	\sum_{x:\mu(x)=i} W_x
	+\sum_{j=1}^{\ell}\mathrm{Tr}_{\setminus j}(Z_{j,i,\mu}) - S_\mu
	\succeq 0,
	\quad \forall\,i,\mu.
\end{equation}

Under these conditions, the supremum of the Lagrangian over the primal variables is simply the constant term $\sum_x \mathrm{Tr}(W_x\rho_x)-t$. We therefore arrive at the dual problem
\begin{equation}
	\label{equ:dual-appD}
	\begin{aligned}
		\min \quad &
		\sum_x \mathrm{Tr}(W_x\rho_x)-t \\
		\mathrm{s.t.}\quad &
		\sum_{x:\mu(x)=i} W_x
		+\sum_{j=1}^{\ell}\mathrm{Tr}_{\setminus j}(Z_{j,i,\mu}) - S_\mu
		\succeq 0,
		\quad \forall\,i,\mu,\\
		&
		\sum_{j=1}^{\ell} Z_{j,i_j,\mu}
		+y_{\mathbf{i},\mu}\,\mathbb{I}_{d^\ell}
		+\mathcal{D}_{\mathbf{i},\mu}
		\preceq 0,
		\quad \forall\,\mathbf{i},\mu,\\
		&
		\mathrm{Tr}(S_\mu) + \sum_{\mathbf{i}} y_{\mathbf{i},\mu}-t \ge 0,
		\quad \forall\,\mu,
	\end{aligned}
\end{equation}
with dual variables
$
W_x\in\mathbb{H}_d,
$
$
Z_{j,i,\mu}\in\mathbb{H}_{d^\ell},
$
$
S_\mu \in \mathbb{H}_d,
$
$
y_{\mathbf{i},\mu}\in\mathbb{R},
$
$
t\in\mathbb{R},
$
and
$
Y_{\alpha\beta,\mathbf{i},\mu}\in\mathbb{C}^{d^{\ell-1}\times d^{\ell-1}}.
$
Eq.~\eqref{equ:dual-appD} is the dual form associated with the reduced primal representation Eq.~\eqref{equ:primal-appD}. 
%
%Finally, the dual has the expected witness interpretation. By weak duality, for any primal-feasible family $\{\rho^\prime_x\}_x$ and any dual-feasible tuple $(W,Z,y,Y,t)$, one has
%\begin{equation}
%	\label{equ:weakduality-appD}
%	\sum_x \mathrm{Tr}(W_x\rho^\prime_x)-t \ge 0.
%\end{equation}
%Hence every classically simulable family must satisfy the witness inequality
%\begin{equation}
%	\label{equ:witness-appD}
%	\sum_x \mathrm{Tr}(W_x\rho^\prime_x)\ge t.
%\end{equation}
%Conversely, if for a given family $\{\rho''_x\}_x$ one finds dual-feasible variables such that
%\begin{equation}
%	\label{equ:witness-violation-appD}
%	\sum_x \mathrm{Tr}(W_x\rho''_x)-t<0,
%\end{equation}
%then $\{\rho''_x\}_x$ is infeasible for the primal SDP at the corresponding hierarchy level, and is therefore certified to be nonclassical. In this sense, the operators $\{W_x\}_x$ together with the scalar threshold $t$ define a dual witness of nonclassicality.

\end{appendix}
\end{document}